\newtheorem{theorem}{Theorem}
\newtheorem{algorithm}[theorem]{Algorithm}
\newtheorem{definition}[theorem]{Definition}
\newtheorem{example}[theorem]{Example}
\newtheorem{lemma}[theorem]{Lemma}
\newtheorem{remark}[theorem]{Remark}
\begin{document}

\title{On Near-controllability, Nearly-controllable Subspaces, and
Near-controllability Index of a Class of Discrete-time Bilinear Systems: A
Root Locus Approach\thanks{%
This work was supported by the China Postdoctoral Science Foundation funded
project and the National Natural Science Foundation of China.}}
\author{Lin Tie\thanks{%
The author is with the School of Automation Science and Electrical
Engineering, Beihang University (Beijing University of Aeronautics and
Astronautics), 100191, Beijing, P. R. China. \textit{E-mail address}:
tielinllc@gmail.com.}}
\maketitle

\begin{abstract}
This paper studies near-controllability of a class of discrete-time bilinear
systems via a root locus approach. A necessary and sufficient criterion for
the systems to be nearly controllable is given. In particular, by using the
root locus approach, the control inputs which achieve the state transition
for the nearly controllable systems can be computed. Furthermore, for the
non-nearly controllable systems, nearly-controllable subspaces are derived
and near-controllability index is defined. Accordingly, the controllability
properties of such class of discrete-time bilinear systems are fully
characterized. Finally, examples are provided to demonstrate the results of
the paper.
\end{abstract}

\begin{keywords}
discrete-time bilinear systems, near-controllability, nearly-controllable
subspaces, near-controllability index, root locus approach.
\end{keywords}

\section{Introduction}

Given the capability of modeling a large number of processes in the real
world, bilinear systems have received considerable attention [1-4].
Furthermore, bilinear systems are thought to be simpler and better
understood than most other nonlinear systems. Such systems have particular
advantages on structural properties, optimization, identification, and
control. The relaxed version of a linear switched system is actually a
bilinear system [5]. Owing to both the practical and theoretical importance,
bilinear systems have been a hot topic in the literature of nonlinear
systems over decades.

$\left. {}\right. $

Controllability is one fundamental concept in mathematical control theory.
It was identified in the early 1960s, and then the theory of controllability
for linear systems based on the state space description was systematically
established [6,7]. During nearly the same period, controllability of
nonlinear systems also was considered. Since the 1970s, Lie algebra methods
and other powerful tools of differentiable manifold theory have been
developed to study controllability of nonlinear systems [8-11]. Today,
controllability has played an essential role in the development and
application of mathematical control theory. There are many different kinds
of definitions on controllability, such as local controllability, global
controllability, approximate controllability, positive controllability, and
null controllability. Roughly speaking, controllability is defined as the
ability of a system that the system can be steered from an arbitrary initial
state to an arbitrary terminal state under the action of admissible
controls. It is an important as well as fundamental\ property of control
systems and is of great practical relevance.

$\left. {}\right. $

In bilinear system theory, controllability is one of the main research
topics. This is particularly true for the continuous-time case. More
specifically, controllability of continuous-time bilinear systems has been
extensively investigated profiting from the Lie algebra methods. Various
Lie-algebraic criteria on controllability of continuous-time bilinear
systems were obtained in the literature, which have been summarized and
updated in a recent monograph [4] on bilinear systems. However, for
discrete-time bilinear systems, the controllability results are rather
sparse compared with their continuous-time counterparts. Most of the works
on controllability of discrete-time bilinear systems were done in the 1970s
[12-15], dealing with systems of the form%
\begin{equation}
x\left( k+1\right) =\left( A+u\left( k\right) B\right) x\left( k\right)
\end{equation}%
where $x(k)\in
%TCIMACRO{\U{211d} }%
%BeginExpansion
\mathbb{R}
%EndExpansion
^{n}$, $A,B\in
%TCIMACRO{\U{211d} }%
%BeginExpansion
\mathbb{R}
%EndExpansion
^{n\times n}$, and $u(k)\in
%TCIMACRO{\U{211d} }%
%BeginExpansion
\mathbb{R}
%EndExpansion
$. \textit{System (1) is said to be controllable if, for any }$\xi ,\eta $%
\textit{\ in }$%
%TCIMACRO{\U{211d} }%
%BeginExpansion
\mathbb{R}
%EndExpansion
_{\ast }^{n}$\textit{\ (}$%
%TCIMACRO{\U{211d} }%
%BeginExpansion
\mathbb{R}
%EndExpansion
_{\ast }^{n}:=%
%TCIMACRO{\U{211d} }%
%BeginExpansion
\mathbb{R}
%EndExpansion
^{n}\setminus \left\{ 0\right\} $\textit{),\ there exist a positive integer }%
$l$\textit{\ and a finite control sequence }$u(k)$\textit{\ (}$k=0,1,\ldots
,l-1$\textit{) such that the system can be steered from }$\xi $\textit{\ to }%
$\eta $\textit{\ at }$k=l$. In particular, [12] gave a sufficient condition
for controllability of system (1), which requires, at least, $A$\ is similar
to an orthogonal matrix. [13] studied controllability of system (1) under
the assumption of rank$B=1$ and presented necessary as well as sufficient
conditions; based on the work in [13], [15] improved these conditions by
raising necessary and sufficient ones. It can be seen that, for
controllability of discrete-time bilinear systems, only specific subclasses
have been considered, while most cases remain unsolved. Since the middle
1980s, few work has been reported until the 2000s. One of the reasons
leading to such a few results on controllability of discrete-time bilinear
systems is that the controllability problems are quite difficult to deal
with due to the systems' nonlinearity. Another one is that, for
discrete-time nonlinear systems, semigroups tend to appear so that less
algebraic structure of the systems is available [16]. Recently, [17]
continued the study on controllability of system (1) and obtained necessary
and sufficient conditions for the case of $AB=BA$, where it was shown that
the system can be controllable only if its dimension is no greater than two.
Nevertheless, an uncontrollable system can still have a large \textit{%
controllable region}\footnote{%
A controllable region is a region in $%
%TCIMACRO{\U{211d} }%
%BeginExpansion
\mathbb{R}
%EndExpansion
^{n}$ on which the system is controllable. Namely, for any $\xi ,\eta $ in
this region, there exist control inputs that steer the system from $\xi $ to
$\eta $.}. This has first been proved for system (1) with $A=I$, i.e.%
\begin{equation}
x\left( k+1\right) =\left( I+u\left( k\right) B\right) x\left( k\right)
\end{equation}%
which is uncontrollable with dimension greater than two [17]. Indeed, if $B$
has only real eigenvalues that are nonzero and pairwise distinct, then the
system (2) has a large controllable region, which nearly covers the whole
space, and is nearly controllable.

$\left. {}\right. $

If we only use \textquotedblleft uncontrollable\textquotedblright\ to
describe a system which is not controllable according to the general
controllability definition, we may miss some valuable properties of it.
Near-controllability is thus introduced to describe those systems that are
uncontrollable but have a very large controllable region. Nearly
controllable systems exist widely in nonlinear systems. This property was
first defined and was demonstrated on system (2) in [18], and it was then
generalized to system (1) with $B=I$ in [19] and to both continuous-time and
discrete-time nonlinear systems that are not necessarily bilinear in [20].
The definition of near-controllability, which was first given in [18], has
now been updated in [20]. \textit{A continuous-time system }$\dot{x}\left(
t\right) =f\left( x\left( t\right) ,u\left( t\right) \right) $\textit{\
(discrete-time system }$x\left( k+1\right) =f\left( x\left( k\right)
,u\left( k\right) \right) $\textit{) is said to be nearly controllable if,
for any }$\xi \in
%TCIMACRO{\U{211d} }%
%BeginExpansion
\mathbb{R}
%EndExpansion
^{n}\left\backslash \mathcal{E}\right. $\textit{\ and any }$\eta \in
%TCIMACRO{\U{211d} }%
%BeginExpansion
\mathbb{R}
%EndExpansion
^{n}\left\backslash \mathcal{F}\right. $\textit{, there exist a piecewise
continuous control }$u\left( t\right) $\textit{\ and }$T>0$\textit{\ (a
finite control sequence }$u\left( k\right) $\textit{, }$k=0,1,\ldots ,l-1$%
\textit{, where }$l$\textit{\ is a positive integer) such that the system
can be steered from }$\xi $\textit{\ to }$\eta $\textit{\ at some }$t\in
\left( 0,T\right) $\textit{\ (}$k=l$\textit{), where }$\mathcal{E}$\textit{\
and }$\mathcal{F}$\textit{\ are two sets of zero Lebesgue measure in }$%
%TCIMACRO{\U{211d} }%
%BeginExpansion
\mathbb{R}
%EndExpansion
^{n}$. If we let $\mathcal{E},\mathcal{F}=\varnothing $, then the
near-controllability definition degenerates to the general controllability
definition. Thus, near-controllability includes the notion of
controllability and may better characterize the controllability properties
of nonlinear systems. However, most of the existing works on
near-controllability are reported for discrete-time bilinear systems and the
study on this topic is just at the beginning.

$\left. {}\right. $

In this paper, we continue the research on near-controllability of system
(2) with $B$ having real eigenvalues only. We improve the sufficient
condition for near-controllability of the system (2) in [18] by giving a
necessary and sufficient one. In particular, we apply a new approach this
time to prove near-controllability. That is, a root locus approach is
proposed in this paper. Compared with the technique used in [18,19] which is
based on the implicit function theorem, by the root locus approach we can
not only improve the obtained result on near-controllability, but also
compute the required control inputs that achieve the state transition%
\footnote{%
For nonlinear systems, it is, in general, hard to compute the control inputs
to achieve state transition even if controllability has been proved. In this
paper, we obtain both the controllability and \textquotedblleft control
computability\textquotedblright\ (the ability of computing the required
control inputs for the transition of any given initial and terminal states).}%
. We thus present a useful algorithm. Furthermore, inspired by the state
space description for linear systems, we also consider the non-nearly
controllable systems. It is well known that if a linear time-invariant
system is uncontrollable, then the state space can be decomposed as a direct
sum of a controllable subspace and an uncontrollable subspace. For the
non-nearly controllable systems, we derive nearly-controllable subspaces and
define near-controllability index by using the improved near-controllability
result, which shows that the non-nearly controllable systems can be
controllable on the nearly-controllable subspaces. The near-controllability
index is used to determine the largest nearly-controllable subspaces that a
non-nearly controllable system has. In summary, the controllability
properties of the system (2) are fully characterized. Finally, we provide
examples to illustrate the conceptions and the results of this paper.

$\left. {}\right. $

\section{A Root Locus Approach to Near-controllability}

In this section, we propose a root locus approach and apply it to prove
near-controllability of the system (2). The idea of the root locus approach
is to use the root locus theory to achieve the state transition, including
transferring any initial state to itself and to a state close to it. More
importantly, the control inputs are computable in these steps. Then, we use
the matrix theory by establishing a transition matrix and some algebraic
techniques to prove near-controllability. The proof steps are similar to
those in [18,19], but the factor that plays the key role has changed. That
is, the implicit function theorem has been replaced by the root locus
approach.

$\left. {}\right. $

\begin{lemma}
If $B\in
%TCIMACRO{\U{211d} }%
%BeginExpansion
\mathbb{R}
%EndExpansion
^{n\times n}$ has only nonzero and real eigenvalues, then there exist
nonzero real numbers $u\left( 0\right) ,u\left( 1\right) ,\ldots ,u\left(
L-1\right) ,u\left( L\right) $ such that%
\begin{equation}
\left( I+u\left( L\right) B\right) \left( I+u\left( L-1\right) B\right)
\cdots \left( I+u\left( 1\right) B\right) \left( I+u\left( 0\right) B\right)
=I
\end{equation}%
if and only if the dimension of the largest Jordan block in the Jordan
canonical form of $B$ is no greater than two. Further, if $B$ not only
satisfies the necessary and sufficient condition but also is cyclic\footnote{%
A matrix is said to be cyclic if its characteristic polynomial is equal to
its minimal polynomial. Namely, only one Jordan block exists for each
eigenvalue in the Jordan canonical form of the matrix.}, then there exist $%
\left( 2m+3\right) $ nonzero real and pairwise distinct numbers $u\left(
0\right) ,u\left( 1\right) ,\ldots ,u\left( 2m+1\right) ,u\left( 2m+2\right)
$ such that (3) holds, where $m$ is the number of the distinct eigenvalues
of $B$.
\end{lemma}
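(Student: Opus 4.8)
The plan is to convert the matrix identity into a statement about one scalar polynomial. Since every $I+u_iB$ is a polynomial in $B$, the factors commute and the left-hand side of (3) equals $p(B)$, where $p(t):=\prod_{i=0}^{L}(1+u_it)$; note $p(0)=1$, and ``the $u_i$ are real and nonzero'' is exactly ``the roots $-1/u_i$ of $p$ are real and nonzero''. Writing the minimal polynomial as $\mu_B(t)=\prod_{j=1}^{m}(t-\lambda_j)^{e_j}$, with $\lambda_1,\dots,\lambda_m$ the distinct (real, nonzero) eigenvalues of $B$ and $e_j$ the size of the largest Jordan block at $\lambda_j$, the equation $p(B)=I$ is equivalent to $\mu_B(t)\mid p(t)-1$, i.e.\ to $p(\lambda_j)=1$ together with $p^{(r)}(\lambda_j)=0$ for $1\le r\le e_j-1$ and every $j$.

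For necessity, suppose some $e_j\ge 3$, so that $p(\lambda_j)=1$, $p'(\lambda_j)=0$ and $p''(\lambda_j)=0$. From $p(\lambda_j)=1\ne0$ we get $1+u_i\lambda_j\ne0$ for every $i$, so near $\lambda_j$ the logarithmic derivative $p'/p=\sum_i u_i/(1+u_it)$ is defined; differentiating it gives $p''/p-(p'/p)^2=-\sum_i u_i^2/(1+u_it)^2$, and evaluating at $\lambda_j$ (where $p=1$, $p'=0$) yields $p''(\lambda_j)=-\sum_i u_i^2/(1+u_i\lambda_j)^2<0$ because every $u_i\ne0$. This contradicts $p''(\lambda_j)=0$, so $e_j\le2$ for all $j$; this is the easy direction.

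For sufficiency I would use a root-locus construction. Assume every $e_j\le2$. Because $p(0)-1=0$ while $t\nmid\mu_B$ (as $\mu_B(0)\ne0$), any admissible $p$ has the form $p(t)=1+t\,\mu_B(t)\,g(t)$. Fix a real polynomial $q$ and consider the gain family $p_K(t)=1-K\,t\,\mu_B(t)\,q(t)$, $K>0$; by the usual root-locus behaviour the roots of $p_K$ converge, as $K\to\infty$, to those of $t\mu_B(t)q(t)$, and none escape to infinity since $\deg p_K=\deg\big(t\mu_B q\big)$ for every $K\ne0$. Near a simple root of $t\mu_B q$ — namely $0$, each $\lambda_j$ with $e_j=1$, and each root of $q$, provided $q$ is chosen with simple real roots away from $0$ and the $\lambda_k$ — exactly one root of $p_K$ lies, and, being a lone root of a real polynomial near a real point, it is real. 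Near a double eigenvalue $\lambda_j$ ($e_j=2$) two roots appear, and the local expansion $p_K(t)\approx 1-K\,\lambda_j q(\lambda_j)\prod_{k\ne j}(\lambda_j-\lambda_k)^{e_k}(t-\lambda_j)^2$ shows these are real exactly when $\operatorname{sign}q(\lambda_j)=\operatorname{sign}\big(\lambda_j\prod_{k\ne j}(\lambda_j-\lambda_k)^{e_k}\big)$. Hence the scheme is: choose such a $q$ meeting these finitely many sign conditions, take $K$ large so that all roots of $p_K$ are real, simple and nonzero (nonzero because $p_K(0)=1$), and then $p_K(t)=\prod_i(1+u_it)$ with the $u_i$ the negative reciprocals of its roots, which gives (3).

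The step I expect to be the obstacle is constructing $q$ and matching the count $2m+3$. Put $m_1:=\#\{j:e_j=1\}$ and $m_2:=m-m_1$, so $\deg\mu_B=m_1+2m_2=m+m_2$; choosing $\deg q=m_1+2$ makes $\deg p_K=1+\deg\mu_B+\deg q=2m+3$, and in the cyclic case $\mu_B$ is the characteristic polynomial, so this degree is forced and $2m+3$ is exactly the number of factors in (3). A degree-$(m_1+2)$ polynomial with all real roots can realize a prescribed sign pattern at the $m_2$ double eigenvalues only if that pattern has at most $m_1+2$ sign changes, and the point that makes the construction work is that the required pattern $\operatorname{sign}\big(\lambda_j\prod_{k\ne j}(\lambda_j-\lambda_k)^{e_k}\big)$, read through the double eigenvalues in increasing order, changes sign at most $m_1+1$ times: after discarding the positive factors $(t-\lambda_k)^2$, a sign change between two consecutive double eigenvalues is forced only by an odd number of simple eigenvalues strictly between them, or by $0$ lying between them, and there are only $m_1$ simple eigenvalues and a single $0$. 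So a suitable $q$ exists — its $m_1+2$ simple real roots can be interlaced with the double eigenvalues to produce the pattern — and for generic large $K$ the $2m+3$ roots of $p_K$ are pairwise distinct, giving the desired pairwise distinct nonzero $u(0),\dots,u(2m+2)$. For the first (not-necessarily-cyclic) statement the same $p_K$ works once one stops tracking the exact degree and distinctness.
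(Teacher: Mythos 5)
Your proof is correct, and it reaches the lemma by a genuinely different route from the paper's; in two respects it is cleaner. For necessity, the paper reduces to a single Jordan block of size $N\ge 3$, rewrites (3) as relations among the elementary symmetric functions of the shifted quantities $w(k)=1/u(k)+\lambda$, and concludes by citing an external fact (Lemma 5 of [19]) that the resulting polynomial must have complex roots; your observation that a real-rooted $p$ with $p(\lambda_j)=1$ and $p'(\lambda_j)=0$ forces $p''(\lambda_j)=-\sum_i u_i^2/(1+u_i\lambda_j)^2<0$ (log-concavity of real-rooted polynomials where they are positive) yields the same contradiction in three lines and is self-contained. For sufficiency, both arguments are root-locus arguments, but at opposite ends of the gain: the paper forces $p-1$ to vanish to order two at \emph{every} eigenvalue, appends two auxiliary simple roots $-\lambda_{m+1},-\lambda_{m+2}$ subject to the magnitude condition (9), recovers the symmetric functions of the $1/u(k)$ by solving a confluent Vandermonde system (Lemma 15), and takes $K$ small so the roots are perturbations of the known real poles; you parametrize the admissible $p$ directly as $1-Kt\,\mu_B(t)q(t)$, send $K\to+\infty$ so the roots collapse onto those of $t\mu_B q$, and use the free factor $q$ to meet the $m_2$ local sign conditions at the double eigenvalues. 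Your count that the required sign pattern has at most $m_1+1$ changes (it is the sign pattern of the degree-$(m_1+1)$ real-rooted polynomial $x\prod_{e_k=1}(x-\lambda_k)$ sampled at the double eigenvalues) is exactly the right justification that a suitable real-rooted $q$ of degree $m_1+2$ exists, and both constructions land on the same control count $2m+3$. What your route buys: no Vandermonde/Vi\`ete detour, a uniform treatment of the noncyclic case (only $\mu_B$ enters, so the paper's separate reduction to the largest cyclic main submatrix is unnecessary), and a necessity proof with no outside reference; the only step to tighten is making the local two-real-roots-versus-conjugate-pair dichotomy at a double eigenvalue rigorous, which a Rouch\'e count plus the intermediate value theorem does. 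What the paper's route buys is that its small-gain locus, anchored at explicitly known poles, is precisely what drives the numerical Algorithm 5, where the target matrix is a slight perturbation of the identity and one wants the roots near those poles.
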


\begin{proof}
We put the proof of necessity in Appendix and only prove sufficiency here.
Assume first that $B$ has the following cyclic form%
\begin{equation}
\left[
\begin{array}{cccccccc}
\lambda _{1} & 1 &  &  &  &  &  &  \\
& \lambda _{1} &  &  &  &  &  &  \\
&  & \ddots &  &  &  &  &  \\
&  &  & \lambda _{r} & 1 &  &  &  \\
&  &  &  & \lambda _{r} &  &  &  \\
&  &  &  &  & \lambda _{r+1} &  &  \\
&  &  &  &  &  & \ddots &  \\
&  &  &  &  &  &  & \lambda _{m}%
\end{array}%
\right]
\end{equation}%
where $B$ is already in Jordan canonical form since a nonsingular
transformation to $B$ does not affect the proof and $\lambda _{1},\ldots
,\lambda _{m}$ are nonzero real and pairwise distinct. In addition, $m+r=n$.
We now show that the equation%
\begin{equation}
\left( I+u\left( 2m+2\right) B\right) \left( I+u\left( 2m+1\right) B\right)
\cdots \left( I+u\left( 1\right) B\right) \left( I+u\left( 0\right) B\right)
=I
\end{equation}%
admits a solution of nonzero real $u\left( 0\right) ,u\left( 1\right)
,\ldots ,u\left( 2m+1\right) ,u\left( 2m+2\right) $. To this end, it will be
proved that the reciprocals of $u\left( 0\right) ,u\left( 1\right) ,\ldots
,u\left( 2m+1\right) ,u\left( 2m+2\right) $ are on the root loci of the
characteristic equation of a closed transfer function related to $B$'s
eigenvalues. Multiplying both sides of equation (5) by%
\begin{equation*}
\overset{2m+2}{\underset{k=0}{\prod }}\frac{1}{u\left( k\right) }
\end{equation*}%
we have%
\begin{equation}
\left( \frac{1}{u\left( 2m+2\right) }I+B\right) \left( \frac{1}{u\left(
2m+1\right) }I+B\right) \cdots \left( \frac{1}{u\left( 1\right) }I+B\right)
\left( \frac{1}{u\left( 0\right) }I+B\right) =\overset{2m+2}{\underset{k=0}{%
\prod }}\frac{1}{u\left( k\right) }I,
\end{equation}%
from which\footnote{%
Throughout this paper, such kind of expression $\overset{2m+2}{\underset{k=0}%
{\sum }}\frac{u\left( k\right) }{\overset{2m+2}{\underset{j=0}{\prod }}%
u\left( j\right) }$ is used to represent the summation
\par
\begin{equation*}
\sum \frac{1}{u\left( k_{0}\right) }\frac{1}{u\left( k_{1}\right) }\cdots
\frac{1}{u\left( k_{2m+1}\right) }=\frac{1}{u\left( 0\right) }\frac{1}{%
u\left( 1\right) }\cdots \frac{1}{u\left( 2m+1\right) }+\frac{1}{u\left(
0\right) }\cdots \frac{1}{u\left( 2m\right) }\frac{1}{u\left( 2m+2\right) }%
+\cdots +\frac{1}{u\left( 1\right) }\cdots \frac{1}{u\left( 2m+1\right) }%
\frac{1}{u\left( 2m+2\right) }
\end{equation*}%
for simplicity. There is no meaning of division in the expression.}%
\begin{equation}
B^{2m+3}+\overset{2m+2}{\underset{k=0}{\sum }}\frac{1}{u\left( k\right) }%
B^{2m+2}+\cdots +\overset{2m+2}{\underset{k=0}{\sum }}\frac{u\left( k\right)
}{\overset{2m+2}{\underset{j=0}{\prod }}u\left( j\right) }B=0.
\end{equation}%
Writing (7) through two groups of equations yields%
\begin{eqnarray}
&&\left\{
\begin{array}{c}
\lambda _{1}^{2m+3}+\overset{2m+2}{\underset{k=0}{\sum }}\frac{1}{u\left(
k\right) }\lambda _{1}^{2m+2}+\cdots +\overset{2m+2}{\underset{k=0}{\sum }}%
\frac{u\left( k\right) }{\overset{2m+2}{\underset{j=0}{\prod }}u\left(
j\right) }\lambda _{1}=0 \\
\left( 2m+3\right) \lambda _{1}^{2m+2}+\left( 2m+2\right) \overset{2m+2}{%
\underset{k=0}{\sum }}\frac{1}{u\left( k\right) }\lambda _{1}^{2m+1}+\cdots +%
\overset{2m+2}{\underset{k=0}{\sum }}\frac{u\left( k\right) }{\overset{2m+2}{%
\underset{j=0}{\prod }}u\left( j\right) }=0 \\
\vdots \\
\lambda _{r}^{2m+3}+\overset{2m+2}{\underset{k=0}{\sum }}\frac{1}{u\left(
k\right) }\lambda _{r}^{2m+2}+\cdots +\overset{2m+2}{\underset{k=0}{\sum }}%
\frac{u\left( k\right) }{\overset{2m+2}{\underset{j=0}{\prod }}u\left(
j\right) }\lambda _{r}=0 \\
\left( 2m+3\right) \lambda _{r}^{2m+2}+\left( 2m+2\right) \overset{2m+2}{%
\underset{k=0}{\sum }}\frac{1}{u\left( k\right) }\lambda _{r}^{2m+1}+\cdots +%
\overset{2m+2}{\underset{k=0}{\sum }}\frac{u\left( k\right) }{\overset{2m+2}{%
\underset{j=0}{\prod }}u\left( j\right) }=0%
\end{array}%
\right. ,  \notag \\
&&\left\{
\begin{array}{c}
\lambda _{r+1}^{2m+3}+\overset{2m+2}{\underset{k=0}{\sum }}\frac{1}{u\left(
k\right) }\lambda _{r+1}^{2m+2}+\cdots +\overset{2m+2}{\underset{k=0}{\sum }}%
\frac{u\left( k\right) }{\overset{2m+2}{\underset{j=0}{\prod }}u\left(
j\right) }\lambda _{r+1}=0 \\
\vdots \\
\lambda _{m}^{2m+3}+\overset{2m+2}{\underset{k=0}{\sum }}\frac{1}{u\left(
k\right) }\lambda _{m}^{2m+2}+\cdots +\overset{2m+2}{\underset{k=0}{\sum }}%
\frac{u\left( k\right) }{\overset{2m+2}{\underset{j=0}{\prod }}u\left(
j\right) }\lambda _{m}=0%
\end{array}%
\right. .
\end{eqnarray}%
By adding the following $\left( m-r\right) $\ constraints%
\begin{gather*}
\left( 2m+3\right) \lambda _{r+1}^{2m+2}+\left( 2m+2\right) \overset{2m+2}{%
\underset{k=0}{\sum }}\frac{1}{u\left( k\right) }\lambda
_{r+1}^{2m+1}+\cdots +\overset{2m+2}{\underset{k=0}{\sum }}\frac{u\left(
k\right) }{\overset{2m+2}{\underset{j=0}{\prod }}u\left( j\right) }=0, \\
\underset{}{\overset{}{\vdots }} \\
\left( 2m+3\right) \lambda _{m}^{2m+2}+\left( 2m+2\right) \overset{2m+2}{%
\underset{k=0}{\sum }}\frac{1}{u\left( k\right) }\lambda _{m}^{2m+1}+\cdots +%
\overset{2m+2}{\underset{k=0}{\sum }}\frac{u\left( k\right) }{\overset{2m+2}{%
\underset{j=0}{\prod }}u\left( j\right) }=0
\end{gather*}%
to the second group of equations in (8) and the two constraints%
\begin{eqnarray*}
\lambda _{m+1}^{2m+3}+\overset{2m+2}{\underset{k=0}{\sum }}\frac{1}{u\left(
k\right) }\lambda _{m+1}^{2m+2}+\cdots +\overset{2m+2}{\underset{k=0}{\sum }}%
\frac{u\left( k\right) }{\overset{2m+2}{\underset{j=0}{\prod }}u\left(
j\right) }\lambda _{m+1} &=&0, \\
\lambda _{m+2}^{2m+3}+\overset{2m+2}{\underset{k=0}{\sum }}\frac{1}{u\left(
k\right) }\lambda _{m+2}^{2m+2}+\cdots +\overset{2m+2}{\underset{k=0}{\sum }}%
\frac{u\left( k\right) }{\overset{2m+2}{\underset{j=0}{\prod }}u\left(
j\right) }\lambda _{m+2} &=&0
\end{eqnarray*}%
to all equations in (8), where $\lambda _{m+1},\lambda _{m+2}$ are chosen
such that%
\begin{equation}
0<\left\vert \lambda _{m+1}\right\vert <\min \left\{ \left\vert \lambda
_{1}\right\vert ,\ldots ,\left\vert \lambda _{m}\right\vert \right\} <\max
\left\{ \left\vert \lambda _{1}\right\vert ,\ldots ,\left\vert \lambda
_{m}\right\vert \right\} <-\lambda _{m+2},
\end{equation}%
we can put the above $\left( m+r\right) +\left( m-r\right) +2=2m+2$
equations into the matrix form%
\begin{equation*}
\left[
\begin{array}{ccc}
\lambda _{1}^{2m+2} & \cdots & \lambda _{1} \\
\left( 2m+2\right) \lambda _{1}^{2m+1} & \cdots & 1 \\
\vdots & \vdots & \vdots \\
\lambda _{m}^{2m+2} & \cdots & \lambda _{m} \\
\left( 2m+2\right) \lambda _{m}^{2m+1} & \cdots & 1 \\
\lambda _{m+1}^{2m+2} & \cdots & \lambda _{m+1} \\
\lambda _{m+2}^{2m+2} & \cdots & \lambda _{m+2}%
\end{array}%
\right] \left[
\begin{array}{c}
\overset{2m+2}{\underset{k=0}{\sum }}\frac{1}{u\left( k\right) } \\
\underset{}{\overset{}{\vdots }} \\
\overset{2m+2}{\underset{k=0}{\sum }}\frac{u\left( k\right) }{\overset{2m+2}{%
\underset{j=0}{\prod }}u\left( j\right) }%
\end{array}%
\right] =\left[
\begin{array}{c}
-\lambda _{1}^{2m+3} \\
-\left( 2m+3\right) \lambda _{1}^{2m+2} \\
\vdots \\
-\lambda _{m}^{2m+3} \\
-\left( 2m+3\right) \lambda _{m}^{2m+2} \\
-\lambda _{m+1}^{2m+3} \\
-\lambda _{m+2}^{2m+3}%
\end{array}%
\right] .
\end{equation*}%
From Lemma 15 in Appendix, we obtain%
\begin{eqnarray*}
\overset{2m+2}{\underset{k=0}{\sum }}\frac{1}{u\left( k\right) } &=&\left(
-1\right) \left( 2\overset{m}{\underset{i=1}{\sum }}\lambda _{i}+\lambda
_{m+1}+\lambda _{m+2}\right) , \\
&&\vdots \\
\overset{2m+2}{\underset{k=0}{\sum }}\frac{u\left( k\right) }{\overset{2m+2}{%
\underset{j=0}{\prod }}u\left( j\right) } &=&\left( -1\right) ^{2m+2}\lambda
_{m+1}\lambda _{m+2}\overset{m}{\underset{i=1}{\prod }}\lambda _{i}^{2}.
\end{eqnarray*}%
If we denote $\overset{2m+2}{\underset{k=0}{\prod }}\frac{1}{u\left(
k\right) }$ by%
\begin{equation*}
\left( -1\right) ^{2m+3}K,
\end{equation*}%
then, from the Vi\`{e}te's formulas, $\frac{1}{u\left( 0\right) },\frac{1}{%
u\left( 1\right) },\ldots ,\frac{1}{u\left( 2m+1\right) },\frac{1}{u\left(
2m+2\right) }$\ are the roots of the following$\ \left( 2m+3\right) $%
th-degree equation%
\begin{equation*}
s^{2m+3}+\left( 2\overset{m}{\underset{i=1}{\sum }}\lambda _{i}+\lambda
_{m+1}+\lambda _{m+2}\right) s^{2m+2}+\cdots +\left( \lambda _{m+1}\lambda
_{m+2}\overset{m}{\underset{i=1}{\prod }}\lambda _{i}^{2}\right) s+K=0
\end{equation*}%
and are thus on the root loci of the characteristic equation of the closed
loop transfer function%
\begin{eqnarray*}
1+KG\left( s\right) &:&=1+\frac{K}{s^{2m+3}+\left( 2\overset{m}{\underset{i=1%
}{\sum }}\lambda _{i}+\lambda _{m+1}+\lambda _{m+2}\right) s^{2m+2}+\cdots
+\left( \lambda _{m+1}\lambda _{m+2}\overset{m}{\underset{i=1}{\prod }}%
\lambda _{i}^{2}\right) s} \\
&=&1+\frac{K}{s(s+\lambda _{1})^{2}\cdots (s+\lambda _{m})^{2}(s+\lambda
_{m+1})(s+\lambda _{m+2})}.
\end{eqnarray*}%
$G\left( s\right) $ has $\left( 2m+3\right) $\ real poles only and has no
zero. By condition (9), in the $\left( 2m+3\right) $ real poles, $-\lambda
_{m+2}$ is the largest pole of $G\left( s\right) $ and is a single pole; $%
-\lambda _{1},\ldots ,-\lambda _{m}$ are double poles of $G\left( s\right) $%
; $0$ and $-\lambda _{m+1}$ are two single poles of $G\left( s\right) $
between two double poles in $-\lambda _{1},\ldots ,-\lambda _{m}$.
Therefore, we can see from the root locus theory [21] that, as $K$ increases
from $0$ to $+\infty $, all root loci of $1+KG\left( s\right) =0$ that start
at the $\left( 2m+3\right) $ real poles will first move on the real axis.
That is to say, we can always choose a $K$ such that $1+KG\left( s\right) =0$
has $\left( 2m+3\right) $ \textit{nonzero real} and \textit{pairwise distinct%
} roots. Indeed, by making $K$ small enough, $\frac{1}{u\left( 0\right) },%
\frac{1}{u\left( 1\right) },\ldots ,\frac{1}{u\left( 2m+1\right) },\frac{1}{%
u\left( 2m+2\right) }$ are perturbed away from $-\lambda _{1},-\lambda
_{1},\ldots ,-\lambda _{m},-\lambda _{m},-\lambda _{m+1},-\lambda _{m+2},0$
and hence are not only nonzero real but also pairwise distinct. Then, the
reciprocals of the nonzero real and pairwise distinct roots are the real
numbers that satisfy equation (5) since $\frac{1}{u\left( 0\right) },\frac{1%
}{u\left( 1\right) },\ldots ,\frac{1}{u\left( 2m+1\right) },\frac{1}{u\left(
2m+2\right) }$ satisfy the equations in (8), (7), and (6).

Finally, if $B$ is noncyclic, we can choose the largest cyclic main
submatrix of $B$, named $B_{sub}$, and there exist real numbers $u\left(
0\right) ,u\left( 1\right) ,\ldots ,u\left( L-1\right) ,u\left( L\right) $
such that%
\begin{equation}
\left( I+u\left( L\right) B_{sub}\right) \left( I+u\left( L-1\right)
B_{sub}\right) \cdots \left( I+u\left( 1\right) B_{sub}\right) \left(
I+u\left( 0\right) B_{sub}\right) =I
\end{equation}%
from the above analysis. By noting the fact that%
\begin{equation*}
\left( I+u\left( L\right) \left[
\begin{array}{cc}
\lambda & 1 \\
0 & \lambda%
\end{array}%
\right] \right) \cdots \left( I+u\left( 0\right) \left[
\begin{array}{cc}
\lambda & 1 \\
0 & \lambda%
\end{array}%
\right] \right) =I
\end{equation*}%
implies%
\begin{equation*}
\left( I+u\left( L\right) \left[
\begin{array}{ccc}
\lambda & 1 & 0 \\
0 & \lambda & 0 \\
0 & 0 & \lambda%
\end{array}%
\right] \right) \cdots \left( I+u\left( 0\right) \left[
\begin{array}{ccc}
\lambda & 1 & 0 \\
0 & \lambda & 0 \\
0 & 0 & \lambda%
\end{array}%
\right] \right) =I,
\end{equation*}%
one can easily verify that eq. (10) still works if $B_{sub}$ is replaced by $%
B$.
\end{proof}

$\left. {}\right. $

By using Lemma 1, we can transfer an arbitrary state to itself and improve
Theorem 2 obtained in [18]. Moreover, by the root locus approach, we can
compute the required control inputs to achieve the state transition.

$\left. {}\right. $

\begin{theorem}
Consider that $B$ in (2) has only real eigenvalues. Then, the system (2) is
nearly controllable if and only if $B$ is nonsingular, cyclic, and does not
have a Jordan block with dimension greater than two in its Jordan canonical
form.
\end{theorem}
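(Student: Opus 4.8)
The plan is to prove the two implications of the equivalence separately, with substance on both sides.

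\emph{Sufficiency.} Suppose $B$ is nonsingular, cyclic, and has no Jordan block of dimension $>2$, so that in its Jordan form the distinct (nonzero, real) eigenvalues $\mu_{1},\dots,\mu_{m}$ carry one block each, of size $d_{i}\in\{1,2\}$ with $\sum_{i}d_{i}=n$. I would follow the scheme of [18,19], with Lemma 1 and the root-locus machinery behind it replacing the implicit function theorem. Every transition matrix $M=\prod_{k}\bigl(I+u(k)B\bigr)$ lies in the commutative $n$-dimensional algebra $\mathbb{R}[B]\cong\mathbb{R}[s]/(m_{B}(s))$, and for all $\xi$ outside a zero-measure set $\xi$ is a cyclic vector of $B$, so that $M\mapsto M\xi$ is then a linear isomorphism onto $\mathbb{R}^{n}$; thus steering such a $\xi$ to an arbitrary $\eta$ is equivalent to realizing one prescribed residue of $\prod_{k}\bigl(1+u(k)s\bigr)$ modulo $m_{B}(s)$. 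By the Chinese Remainder Theorem this residue is encoded by the numbers $P_{i}=\prod_{k}\bigl(1+u(k)\mu_{i}\bigr)$ together with $S_{i}=\sum_{k}u(k)/\bigl(1+u(k)\mu_{i}\bigr)$ for each $i$ with $d_{i}=2$ — altogether $n$ real parameters. Lemma 1 already produces real controls realizing $M=I$, i.e.\ $P_{i}=1$ and $S_{i}=0$, and crucially it does so with the reciprocals of the controls on the root loci of an explicit rational function, so that perturbing the gain $K$, and concatenating several slightly perturbed such blocks, moves the datum $\bigl((P_{i})_{i},(S_{i})_{d_{i}=2}\bigr)$ continuously through a full neighborhood of the identity datum while keeping every control real. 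As the achievable data form a subsemigroup of the group of all admissible data, a neighborhood of the identity suffices to sweep out the whole identity component, while single factors with $1+u(k)\mu_{i}<0$ reach the remaining sign components; translating back through $M\mapsto M\xi$ and working in a Jordan-adapted basis then yields two zero-measure sets $\mathcal{E},\mathcal{F}$ of the product form demanded by the definition. Since every step is carried out by root loci and Vi\`{e}te's formulas, the controls are proved to exist and to be computable simultaneously.

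\emph{Necessity.} I would argue by contraposition, showing that failure of any one condition confines the reachable set from every generic $\xi$ to a set whose complement has positive measure and depends on $\xi$, which is incompatible with the definition. If $B$ is singular, a left null vector $w$ gives $w^{T}\bigl(I+uB\bigr)=w^{T}$, so $w^{T}x(k)\equiv w^{T}\xi$ and the trajectory is trapped in an affine hyperplane depending on $\xi$. If $B$ is not cyclic, some real eigenvalue $\mu$ has two independent left eigenvectors $w_{1},w_{2}$, and then $w_{j}^{T}x(k)=\bigl(\prod_{i}(1+u(i)\mu)\bigr)w_{j}^{T}x(0)$ for $j=1,2$ with the \emph{same} scalar factor, so the ratio $w_{1}^{T}x/w_{2}^{T}x$ is invariant and the trajectory lies in a $\xi$-dependent hyperplane through the origin. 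If $B$ has a Jordan block of dimension $d\ge3$ at an eigenvalue $\mu$, then in a Jordan basis the last three coordinates $(y_{d-2},y_{d-1},y_{d})$ of that chain evolve autonomously under a $3\times3$ Jordan block, and a direct computation gives $\Phi\bigl(x(k+1)\bigr)=\Phi\bigl(x(k)\bigr)-\tfrac12\bigl(u(k)/(1+u(k)\mu)\bigr)^{2}$ for $\Phi(x)=\dfrac{y_{d-2}}{y_{d}}-\dfrac12\Bigl(\dfrac{y_{d-1}}{y_{d}}\Bigr)^{2}$; hence $\Phi$ is nonincreasing along every trajectory (the set $\{y_{d}=0\}$ being invariant and of zero measure), so no $\eta$ with $\Phi(\eta)>\Phi(\xi)$ is reachable from $\xi$, while $\{\Phi>\Phi(\xi)\}$ is a nonempty open set.

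\emph{Main obstacle.} I expect the technical heart to be the middle of the sufficiency argument: verifying that concatenated, root-locus-perturbed control sequences do generate a \emph{full} neighborhood of the identity datum in the $n$-dimensional space of $\bigl((P_{i}),(S_{i})\bigr)$ — a non-degeneracy of generalized-Vandermonde type — and then that the subsemigroup so generated exhausts all admissible data. The contributions of the size-two Jordan blocks are precisely what the distinct-eigenvalue hypothesis of [18] does not cover, and it is there — in tracking which $(P_{i},S_{i})$ are simultaneously realizable by real controls, via root loci and Vi\`{e}te's formulas — that the new approach does the real work.
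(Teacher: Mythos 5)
Your necessity half is essentially complete and correct: the trapped linear functional for singular $B$, the invariant ratio $w_1^T x/w_2^T x$ for a repeated eigenvalue with two blocks (this is exactly the paper's $\xi_i\eta_j=\xi_j\eta_i$ argument), and the monotone quantity $\Phi=y_{d-2}/y_d-\tfrac12(y_{d-1}/y_d)^2$ for a Jordan block of size $\ge 3$ all check out (with $t=u/(1+u\mu)$ one indeed gets $\Phi'=\Phi-\tfrac12 t^2$, so $\{\Phi>\Phi(\xi)\}$ is an unreachable open set). That last invariant is in fact a cleaner route than the paper's, which instead proves (Lemma 9 and the necessity part of Lemma 1) that no nonzero real controls can return a state with a nonzero third chain-coordinate to itself, via a polynomial-root argument.

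The sufficiency half, however, has a genuine gap exactly where you flag it, and flagging it is not closing it. Your reduction to realizing the $n$-dimensional datum $\bigl((P_i),(S_i)\bigr)$ in the unit group of $\mathbb{R}[B]$ is the right picture, and the semigroup-generates-the-connected-component step and the sign-pattern step are fine (the latter is Lemma 5 of [18]). But the claim that root-locus-perturbed copies of the Lemma 1 controls sweep out a \emph{full} neighborhood of the identity datum is the whole theorem, and perturbing the scalar gain $K$ gives only a one-parameter curve of achievable data; concatenating such blocks multiplies data but does not obviously produce an open set. The paper closes precisely this gap by a different device: it applies the implicit function theorem to $F(t_0,\ldots,t_{n-1},y)=\prod_k(I+t_kB)\xi-\prod_k(I+u_kB)(\xi+y)$ and computes the Jacobian explicitly as $|B|\,V(t_{n-1},\ldots,t_0)\,\bigl\vert\,\xi\ \ B\xi\ \cdots\ B^{n-1}\xi\,\bigr\vert$ --- your ``generalized-Vandermonde non-degeneracy'' in exact form --- which is nonzero because Lemma 1 supplies controls that are nonzero \emph{and pairwise distinct} and $\xi$ is a cyclic vector. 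This gives local reachability of a neighborhood of $\xi$; the paper then reaches an arbitrary $\eta$ in the same orthant not by a semigroup argument in the algebra but by the commuting transition matrix $T_{\xi\to\eta}$ with positive spectrum, whose $q$-th root tends to $I$, so that $q$ repetitions of the controls reaching $T_{\xi\to\eta}^{1/q}\xi$ reach $\eta$. In the paper the root locus is what makes the controls \emph{computable} (Algorithm 5), not what proves local surjectivity; your proposal asks the root locus to do both jobs and leaves the second one unproved. To repair your argument you would need to exhibit $n$ independent tangent directions in the data space at the identity --- which is exactly the confluent-Vandermonde Jacobian computation above --- or import the implicit-function-theorem step as the paper does.
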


\begin{proof}
The proof of necessity is put in Appendix. For sufficiency, we still assume
that $B$ is of the Jordan canonical form given in (4) without loss of
generality. According to Lemma 1, there exist $\left( 2m+3\right) $ nonzero
real and pairwise distinct values $u_{0},u_{1},\ldots ,u_{2m+1},u_{2m+2}$
such that%
\begin{equation}
\left( I+u_{2m+2}B\right) \left( I+u_{2m+1}B\right) \cdots \left(
I+u_{1}B\right) \left( I+u_{0}B\right) =I\text{.}
\end{equation}%
We next prove that, for almost any $\xi $ in $%
%TCIMACRO{\U{211d} }%
%BeginExpansion
\mathbb{R}
%EndExpansion
^{n}$, we can construct control inputs such that $\xi $ can be transferred
to an arbitrary state which is close to $\xi $. We first prove the existence
of such control inputs, which is similar to what we have done in [18]. We
then show how to compute such control inputs by applying the root locus
approach, which is concluded in the final step of Algorithm 5 (to shorten
the proof we do not write it here).

Consider the following function%
\begin{align}
F\left( t_{0},t_{1},\ldots ,t_{n-2},t_{n-1},y\right) & :=\left(
I+t_{n-1}B\right) \left( I+t_{n-2}B\right) \cdots \left( I+t_{1}B\right)
\left( I+t_{0}B\right) \xi -  \notag \\
& \left( I+u_{n-1}B\right) \left( I+u_{n-2}B\right) \cdots \left(
I+u_{1}B\right) \left( I+u_{0}B\right) \left( \xi +y\right)  \notag
\end{align}%
where $t_{0},t_{1},\ldots ,t_{n-2},t_{n-1}\in
%TCIMACRO{\U{211d} }%
%BeginExpansion
\mathbb{R}
%EndExpansion
$ and $y\in
%TCIMACRO{\U{211d} }%
%BeginExpansion
\mathbb{R}
%EndExpansion
^{n}$. Apparently, $\left( u_{0},u_{1},\ldots ,u_{n-2},u_{n-1},\mathbf{0}%
\right) $ is a zero of $F$. Using Lemma 1 in [18] yields%
\begin{equation}
\left\vert \frac{\partial \left( F\left( t_{0},t_{1},\ldots
,t_{n-2},t_{n-1},y\right) \right) }{\partial \left( t_{0},t_{1},\ldots
,t_{n-2},t_{n-1}\right) }\right\vert =\left\vert B\right\vert V\left(
t_{n-1},t_{n-2},\ldots ,t_{1},t_{0}\right) \left\vert
\begin{array}{ccccc}
\xi & B\xi & \cdots & B^{n-2}\xi & B^{n-1}\xi%
\end{array}%
\right\vert
\end{equation}%
where $V$ is the Vandermonde determinant and $\left\vert \cdot \right\vert $
denotes the determinant of a matrix. Thus, for any%
\begin{equation*}
\xi \notin \left\{ \xi \left\vert \text{ }\left\vert
\begin{array}{ccccc}
\xi & B\xi & \cdots & B^{n-2}\xi & B^{n-1}\xi%
\end{array}%
\right\vert =0\right. \right\} ,
\end{equation*}%
the Jacobian determinant (12) does not vanish at $\left( u_{0},u_{1},\ldots
,u_{n-2},u_{n-1},\mathbf{0}\right) $. According to the implicit function
theorem, there exist two open neighborhoods, named%
\begin{equation*}
O\left( \left[
\begin{array}{ccccc}
u_{0} & u_{1} & \cdots & u_{n-2} & u_{n-1}%
\end{array}%
\right] ^{T},\rho _{1}\right) ,\text{ }O\left( \mathbf{0},\rho _{2}\right)
\end{equation*}%
respectively, such that%
\begin{equation*}
F\left( t_{0}\left( y\right) ,t_{1}\left( y\right) ,\ldots ,t_{n-2}\left(
y\right) ,t_{n-1}\left( y\right) ,y\right) =\mathbf{0}
\end{equation*}%
where $\left[
\begin{array}{ccccc}
t_{0}\left( y\right) & t_{1}\left( y\right) & \cdots & t_{n-2}\left( y\right)
& t_{n-1}\left( y\right)%
\end{array}%
\right] ^{T}\in O\left( \left[
\begin{array}{ccccc}
u_{0} & u_{1} & \cdots & u_{n-2} & u_{n-1}%
\end{array}%
\right] ^{T},\rho _{1}\right) $ \ and $y\in O\left( \mathbf{0},\rho
_{2}\right) $, i.e.%
\begin{eqnarray*}
&&\left( I+t_{n-1}\left( y\right) B\right) \left( I+t_{n-2}\left( y\right)
B\right) \cdots \left( I+t_{1}\left( y\right) B\right) \left( I+t_{0}\left(
y\right) B\right) \xi \\
&=&\left( I+u_{n-1}B\right) \left( I+u_{n-2}B\right) \cdots \left(
I+u_{1}B\right) \left( I+u_{0}B\right) \left( \xi +y\right) .
\end{eqnarray*}%
Then, by (11) we have%
\begin{eqnarray}
&&\left( I+u_{2m+2}B\right) \cdots \left( I+u_{n}B\right) \left(
I+t_{n-1}\left( y\right) B\right) \cdots \left( I+t_{0}\left( y\right)
B\right) \xi  \notag \\
&=&\left( I+u_{2m+2}B\right) \cdots \left( I+u_{n}B\right) \left(
I+u_{n-1}B\right) \cdots \left( I+u_{0}B\right) \left( \xi +y\right) =\xi +y.
\end{eqnarray}%
This means $\xi $ can be transferred to any state that is close enough to $%
\xi $. From $B$'s structure and PBH test [22],%
\begin{equation*}
\left\{ \xi \left\vert \text{ }\left\vert
\begin{array}{ccccc}
\xi & B\xi & \cdots & B^{n-2}\xi & B^{n-1}\xi%
\end{array}%
\right\vert =0\right. \right\} =\left\{ \xi =\left[
\begin{array}{cccc}
\xi _{1} & \xi _{2} & \cdots & \xi _{n}%
\end{array}%
\right] ^{T}\left\vert \text{ }\xi _{2}\xi _{4}\cdots \xi _{2r}\xi
_{2r+1}\cdots \xi _{n}=0\right. \right\}
\end{equation*}%
that is a hypersurface in $%
%TCIMACRO{\U{211d} }%
%BeginExpansion
\mathbb{R}
%EndExpansion
^{n}$ and separates $%
%TCIMACRO{\U{211d} }%
%BeginExpansion
\mathbb{R}
%EndExpansion
^{n}$ into $2^{m}$ open orthants (which are the same as those in (12) in
[19]). We now prove that the system (2) is controllable on each of the $2^{m}
$ open orthants. For any two states $\xi ,\eta $ in one orthant, we
establish the \textit{transition} \textit{matrix}%
\begin{equation}
T_{\xi \rightarrow \eta }=\left[
\begin{array}{cccccccc}
\frac{\eta _{2}}{\xi _{2}} & \frac{\eta _{1}}{\xi _{2}}-\frac{\xi _{1}\eta
_{2}}{\xi _{2}^{2}} &  &  &  &  &  &  \\
&
\begin{array}{c}
\\
\frac{\eta _{2}}{\xi _{2}}%
\end{array}
&  &  &  &  &  &  \\
&  & \ddots &  &  &  &  &  \\
&  &  &
\begin{array}{c}
\\
\frac{\eta _{2r}}{\xi _{2r}}%
\end{array}
&
\begin{array}{c}
\\
\frac{\eta _{2r-1}}{\xi _{2r}}-\frac{\xi _{2r-1}\eta _{2r}}{\xi _{2r}^{2}}%
\end{array}
&  &  &  \\
&  &  &  &
\begin{array}{c}
\\
\begin{array}{c}
\frac{\eta _{2r}}{\xi _{2r}} \\
\end{array}%
\end{array}
&  &  &  \\
&  &  &  &  &
\begin{array}{c}
\\
\frac{\eta _{2r+1}}{\xi _{2r+1}} \\
\end{array}
&  &  \\
&  &  &  &  &  & \ddots &  \\
&  &  &  &  &  &  & \frac{\eta _{n}}{\xi _{n}}%
\end{array}%
\right] .
\end{equation}%
It can be seen that $\eta =T_{\xi \rightarrow \eta }\xi $ and all the
eigenvalues of $T_{\xi \rightarrow \eta }$ are positive since $\xi ,\eta $
belong to the same orthant. Furthermore,%
\begin{eqnarray*}
\underset{q\rightarrow +\infty }{\lim }T_{\xi \rightarrow \eta }^{\frac{1}{q}%
} &=&\underset{q\rightarrow +\infty }{\lim }\left[
\begin{array}{cccccccc}
\left( \frac{\eta _{2}}{\xi _{2}}\right) ^{\frac{1}{q}} & \frac{\frac{\eta
_{1}}{\xi _{2}}-\frac{\xi _{1}\eta _{2}}{\xi _{2}^{2}}}{q\left( \frac{\eta
_{2}}{\xi _{2}}\right) ^{\frac{q-1}{q}}} &  &  &  &  &  &  \\
& \left( \frac{\eta _{2}}{\xi _{2}}\right) ^{\frac{1}{q}} &  &  &  &  &  &
\\
&  & \ddots &  &  &  &  &  \\
&  &  & \left( \frac{\eta _{2r}}{\xi _{2r}}\right) ^{\frac{1}{q}} & \frac{%
\frac{\eta _{2r-1}}{\xi _{2r}}-\frac{\xi _{2r-1}\eta _{2r}}{\xi _{2r}^{2}}}{%
q\left( \frac{\eta _{2r}}{\xi _{2}r}\right) ^{\frac{q-1}{q}}} &  &  &  \\
&  &  &  & \left( \frac{\eta _{2r}}{\xi _{2r}}\right) ^{\frac{1}{q}} &  &  &
\\
&  &  &  &  & \left( \frac{\eta _{2r+1}}{\xi _{2r+1}}\right) ^{\frac{1}{q}}
&  &  \\
&  &  &  &  &  & \ddots &  \\
&  &  &  &  &  &  & \left( \frac{\eta _{n}}{\xi _{n}}\right) ^{\frac{1}{q}}%
\end{array}%
\right] \\
&=&I.
\end{eqnarray*}%
Therefore, we can choose a positive integer $q$ such that $T_{\xi
\rightarrow \eta }^{\frac{1}{q}}\xi $ is sufficiently close to $\xi $ and
hence can be reached from $\xi $\ according to (13). That is, there exist
control inputs $\bar{u}_{0},\ldots ,\bar{u}_{n-1},u_{n},\ldots ,u_{2m+2}$
such that%
\begin{equation*}
\left( I+u_{2m+2}B\right) \cdots \left( I+u_{n}B\right) \left( I+\bar{u}%
_{n-1}B\right) \cdots \left( I+\bar{u}_{0}B\right) \xi =T_{\xi \rightarrow
\eta }^{\frac{1}{q}}\xi ,
\end{equation*}%
where $\left[
\begin{array}{ccccc}
\bar{u}_{0} & \bar{u}_{1} & \cdots & \bar{u}_{n-2} & \bar{u}_{n-1}%
\end{array}%
\right] ^{T}\in O\left( \left[
\begin{array}{ccccc}
u_{0} & u_{1} & \cdots & u_{n-2} & u_{n-1}%
\end{array}%
\right] ^{T},\rho _{1}\right) $. Note that $T_{\xi \rightarrow \eta }^{\frac{%
1}{q}}$ and $B$ commute with each other. Applying $q$ groups of $\bar{u}%
_{0},\ldots ,\bar{u}_{n-1},u_{n},\ldots ,u_{2m+2}$ yields%
\begin{eqnarray*}
&&\left[ \left( I+u_{2m+2}B\right) \cdots \left( I+u_{n}B\right) \left( I+%
\bar{u}_{n-1}B\right) \cdots \left( I+\bar{u}_{0}B\right) \right] ^{q}\xi \\
&=&\left[ \left( I+u_{2m+2}B\right) \cdots \left( I+u_{n}B\right) \left( I+%
\bar{u}_{n-1}B\right) \cdots \left( I+\bar{u}_{0}B\right) \right]
^{q-1}T_{\xi \rightarrow \eta }^{\frac{1}{q}}\xi \\
&=&T_{\xi \rightarrow \eta }^{\frac{1}{q}}\left[ \left( I+u_{2m+2}B\right)
\cdots \left( I+u_{n}B\right) \left( I+\bar{u}_{n-1}B\right) \cdots \left( I+%
\bar{u}_{0}B\right) \right] ^{q-1}\xi \\
&=&\cdots =\left( T_{\xi \rightarrow \eta }^{\frac{1}{q}}\right) ^{q}\xi
=T_{\xi \rightarrow \eta }\xi =\eta .
\end{eqnarray*}%
That is, controllability on each of the $2^{m}$ open orthants has been
proved. The rest is to \textquotedblleft connect\textquotedblright\ the $%
2^{m}$ open orthants, i.e. to prove that the system is controllable on the
union of the $2^{m}$ open orthants. One can readily finish this by using
Lemma 5 in [18] and following the arguments used in [18] (pp. 2856-2857,
from eq. (33) to eq. (37)), where the only explanation we should make here
is that, although $\lambda _{1},\ldots ,\lambda _{r}$ are double eigenvalues
of $B$, the fact does affect connecting\ the $2^{m}$ open orthants. Indeed,
just by replacing the subscript $n$ by $m$ in equations from (33) to (37) in
[18], we can complete the proof.

So far, we have proved that the system (2) is controllable on%
\begin{equation}
%TCIMACRO{\U{211d} }%
%BeginExpansion
\mathbb{R}
%EndExpansion
^{n}\setminus \left\{ \xi \left\vert \text{ }\left\vert
\begin{array}{ccccc}
\xi & B\xi & \cdots & B^{n-2}\xi & B^{n-1}\xi%
\end{array}%
\right\vert =0\right. \right\} .
\end{equation}%
Recalling the near-controllability definition, we have that%
\begin{equation*}
\mathcal{E}=\mathcal{F}=\left\{ \xi \left\vert \text{ }\left\vert
\begin{array}{ccccc}
\xi & B\xi & \cdots & B^{n-2}\xi & B^{n-1}\xi%
\end{array}%
\right\vert =0\right. \right\}
\end{equation*}%
and the system (2) is nearly controllable since $\mathcal{E},\mathcal{F}$
are two sets of zero Lebesgue measure in $%
%TCIMACRO{\U{211d} }%
%BeginExpansion
\mathbb{R}
%EndExpansion
^{n}$.
\end{proof}

$\left. {}\right. $

\begin{remark}
Lemma 1 is important to obtain the stronger result on near-controllability
of the system (2). One can see that Lemma 2 in [18] is a special case of
Lemma 1 and Theorem 2 in [18] is a special case of Theorem 2 also. In fact,
[18] was focusing on the case of B being diagonalizable, while the case
dealt with in this paper is more general. Moreover, the \textit{transition}
matrix $T_{\xi \rightarrow \eta }$ with the root locus approach will make it
possible to compute the required control inputs that achieve the state
transition.
\end{remark}

$\left. {}\right. $

\begin{remark}
One can further prove that $\mathcal{F}$ in the proof of Theorem 2 can be $%
\varnothing $. That is, for any $\xi $ in (15) and any $\eta \in
%TCIMACRO{\U{211d} }%
%BeginExpansion
\mathbb{R}
%EndExpansion
^{n}$, $\xi $ can be transferred to $\eta $. See the corresponding part in
the proof of Theorem 1 in [19] (from the last equation in p. 653 to the end
of the proof) for reference.
\end{remark}

$\left. {}\right. $

By using the root locus approach, an algorithm is given to compute the
required control inputs that steer the nearly controllable system (2) from
one state to another, which both belong to (15).

$\left. {}\right. $

\begin{algorithm}
Steps on computing control inputs for given initial and terminal states:
\end{algorithm}

\begin{itemize}
\item 1. Transform $B$ into the Jordan canonical form as given in (4) by a
nonsingular matrix $P$. The initial and terminal states $\xi ,\eta $ are
thus transformed into $P\xi ,P\eta $, respectively.
\end{itemize}

\begin{itemize}
\item 2. Find the control inputs that transfer $P\xi $ to a state $\zeta $
which belongs to the same orthant as $P\eta $ belongs to (Lemma 5 in [18]
will be helpful and its proof includes the details on how to find such
control inputs).

\item 3. Get the transition matrix $T_{\zeta \rightarrow P\eta }$ for $\zeta
,P\eta $ from (14).

\item 4. Choose $\lambda _{m+1},\lambda _{m+2}$ that satisfy (9).

\item 5. Choose a positive integer $q$ and compute $T_{\zeta \rightarrow
P\eta }^{\frac{1}{q}}$.

\item 6. Obtain the root loci of the characteristic equation of the
following closed loop transfer function%
\begin{equation}
1+KG\left( s\right) :=1+\frac{K\left( \left( -1\right) ^{2m+2}\mu
_{1}s^{2m+2}+\cdots +\left( -1\right) \mu _{2m}s+1\right) }{s\left(
s+\lambda _{1}\right) ^{2}\cdots \left( s+\lambda _{m}\right) ^{2}\left(
s+\lambda _{m+1}\right) \left( s+\lambda _{m+2}\right) }
\end{equation}

where%
\begin{equation}
\left[
\begin{array}{c}
\mu _{1} \\
\vdots \\
\mu _{2m}%
\end{array}%
\right] :=\left[
\begin{array}{ccc}
\lambda _{1}^{2m+2} & \cdots & \lambda _{1} \\
\left( 2m+2\right) \lambda _{1}^{2m+1} & \cdots & 1 \\
\vdots & \vdots & \vdots \\
\lambda _{m}^{2m+2} & \cdots & \lambda _{m} \\
\left( 2m+2\right) \lambda _{m}^{2m+1} & \cdots & 1 \\
\lambda _{m+1}^{2m+2} & \cdots & \lambda _{m+1} \\
\lambda _{m+2}^{2m+2} & \cdots & \lambda _{m+2}%
\end{array}%
\right] ^{-1}\left[
\begin{array}{c}
T_{\zeta \rightarrow P\eta }^{\frac{1}{q}}\left( 2,2\right) -1 \\
T_{\zeta \rightarrow P\eta }^{\frac{1}{q}}\left( 1,2\right) \\
\vdots \\
T_{\zeta \rightarrow P\eta }^{\frac{1}{q}}\left( 2r,2r\right) -1 \\
T_{\zeta \rightarrow P\eta }^{\frac{1}{q}}\left( 2r-1,2r\right) \\
T_{\zeta \rightarrow P\eta }^{\frac{1}{q}}\left( 2r+1,2r+1\right) -1 \\
0 \\
\vdots \\
T_{\zeta \rightarrow P\eta }^{\frac{1}{q}}\left( n,n\right) -1 \\
0 \\
0 \\
0%
\end{array}%
\right]
\end{equation}

with $T_{\zeta \rightarrow P\eta }^{\frac{1}{q}}\left( i,j\right) $ denoting
the $\left( i,j\right) $th entry of $T_{\zeta \rightarrow P\eta }^{\frac{1}{q%
}}$ and $K$\ increases from $0$ to $+\infty $. If any of the root loci
leaves the real axis directly at the pole, return to the former step and
choose another integer $q$ greater than the previous one. Otherwise, choose
a suitable $K$ such that the roots of $1+KG\left( s\right) =0$ are all real.
Then, the reciprocals of the real roots are the control inputs that transfer
$\zeta $ to $T_{\zeta \rightarrow P\eta }^{\frac{1}{q}}\zeta $. $q$ groups
of such control inputs together with the control inputs that transfer $P\xi $
to $\zeta $ are the desired ones which steer the nearly controllable system
(2) from $\xi $ to $\eta $.
\end{itemize}

$\left. {}\right. $

\emph{An Explanation to Step 6 of Algorithm 5.} Consider the equation%
\begin{equation}
\left( I+v_{2m+2}B\right) \left( I+v_{2m+1}B\right) \cdots \left(
I+v_{1}B\right) \left( I+v_{0}B\right) \zeta =T_{\zeta \rightarrow P\eta }^{%
\frac{1}{q}}\zeta .
\end{equation}%
From the proof of Theorem 2, we know that if $q$ is large enough, then eq.
(18) admits a real solution of $v_{0},v_{1},\ldots ,v_{2m+1},v_{2m+2}$. Eq.
(18) is equivalent to the matrix equation%
\begin{equation*}
\left( I+v_{2m+2}B\right) \left( I+v_{2m+1}B\right) \cdots \left(
I+v_{1}B\right) \left( I+v_{0}B\right) =T_{\zeta \rightarrow P\eta }^{\frac{1%
}{q}}.
\end{equation*}%
Let%
\begin{equation}
\overset{2m+2}{\underset{k=0}{\prod }}\frac{1}{v_{k}}=\left( -1\right)
^{2m+3}K.
\end{equation}%
As shown for deriving eq. (7) in the proof of Lemma 1, we can obtain%
\begin{equation*}
B^{2m+3}+\overset{2m+2}{\underset{k=0}{\sum }}\frac{1}{v_{k}}B^{2m+2}+\cdots
+\overset{2m+2}{\underset{k=0}{\sum }}\frac{v_{k}}{\overset{2m+2}{\underset{%
j=0}{\prod }}v_{j}}B=\left( -1\right) ^{2m+3}K\left( T_{\zeta \rightarrow
P\eta }^{\frac{1}{q}}-I\right)
\end{equation*}%
which can be written via equations as those in (8). Using $\lambda
_{m+1},\lambda _{m+2}$ chosen in Step 4 and introducing the same constraints
as those for (8), we can deduce%
\begin{multline*}
\left[
\begin{array}{ccc}
\lambda _{1}^{2m+2} & \cdots & \lambda _{1} \\
\left( 2m+2\right) \lambda _{1}^{2m+1} & \cdots & 1 \\
\vdots & \vdots & \vdots \\
\lambda _{m}^{2m+2} & \cdots & \lambda _{m} \\
\left( 2m+2\right) \lambda _{m}^{2m+1} & \cdots & 1 \\
\lambda _{m+1}^{2m+2} & \cdots & \lambda _{m+1} \\
\lambda _{m+2}^{2m+2} & \cdots & \lambda _{m+2}%
\end{array}%
\right] \left[
\begin{array}{c}
\overset{2m+2}{\underset{k=0}{\sum }}\frac{1}{v_{k}} \\
\overset{}{\underset{}{\vdots }} \\
\overset{2m+2}{\underset{k=0}{\sum }}\frac{v_{k}}{\overset{2m+2}{\underset{%
j=0}{\prod }}v_{j}}%
\end{array}%
\right] \\
=\left[
\begin{array}{c}
-\lambda _{1}^{2m+3} \\
-\left( 2m+3\right) \lambda _{1}^{2m+2} \\
\vdots \\
-\lambda _{m}^{2m+3} \\
-\left( 2m+3\right) \lambda _{m}^{2m+2} \\
-\lambda _{m+1}^{2m+3} \\
-\lambda _{m+2}^{2m+3}%
\end{array}%
\right] +\left( -1\right) ^{2m+3}K\left[
\begin{array}{c}
T_{\zeta \rightarrow P\eta }^{\frac{1}{q}}\left( 2,2\right) -1 \\
T_{\zeta \rightarrow P\eta }^{\frac{1}{q}}\left( 1,2\right) \\
\vdots \\
T_{\zeta \rightarrow P\eta }^{\frac{1}{q}}\left( 2r,2r\right) -1 \\
T_{\zeta \rightarrow P\eta }^{\frac{1}{q}}\left( 2r-1,2r\right) \\
T_{\zeta \rightarrow P\eta }^{\frac{1}{q}}\left( 2r+1,2r+1\right) -1 \\
0 \\
\vdots \\
T_{\zeta \rightarrow P\eta }^{\frac{1}{q}}\left( n,n\right) -1 \\
0 \\
0 \\
0%
\end{array}%
\right] .
\end{multline*}%
By Lemma 15,%
\begin{equation*}
\left[
\begin{array}{c}
\overset{2m+2}{\underset{k=0}{\sum }}\frac{1}{v_{k}} \\
\overset{}{\underset{}{\vdots }} \\
\overset{2m+2}{\underset{k=0}{\sum }}\frac{v_{k}}{\overset{2m+2}{\underset{%
j=0}{\prod }}v_{j}}%
\end{array}%
\right] =\left[
\begin{array}{c}
\left( -1\right) \left( 2\overset{m}{\underset{i=1}{\sum }}\lambda
_{i}+\lambda _{m+1}+\lambda _{m+2}\right) \\
\overset{}{\underset{}{\vdots }} \\
\left( -1\right) ^{2m+2}\lambda _{m+1}\lambda _{m+2}\overset{m}{\underset{i=1%
}{\prod }}\lambda _{i}^{2}%
\end{array}%
\right] +\left( -1\right) ^{2m+3}K\left[
\begin{array}{c}
\mu _{1} \\
\vdots \\
\mu _{2m}%
\end{array}%
\right]
\end{equation*}%
where $\mu _{1},\ldots ,\mu _{2m}$ are given in (17). Then, with (19) we
have that $\frac{1}{v_{0}},\frac{1}{v_{1}},\ldots ,\frac{1}{v_{2m+1}},\frac{1%
}{v_{2m+2}}$ are the roots of the following$\ \left( 2m+3\right) $th-degree
equation%
\begin{multline*}
s^{2m+3}+\left( 2\overset{m}{\underset{i=1}{\sum }}\lambda _{i}+\lambda
_{m+1}+\lambda _{m+2}+\left( -1\right) ^{2m+2}K\mu _{1}\right)
s^{2m+2}+\cdots + \\
\left( \lambda _{m+1}\lambda _{m+2}\overset{m}{\underset{i=1}{\prod }}%
\lambda _{i}^{2}+\left( -1\right) K\mu _{2m}\right) s+K=0
\end{multline*}%
which is equivalent to%
\begin{multline*}
s\left( s+\lambda _{1}\right) ^{2}\cdots \left( s+\lambda _{m}\right)
^{2}\left( s+\lambda _{m+1}\right) \left( s+\lambda _{m+2}\right) + \\
K\left( \left( -1\right) ^{2m+2}\mu _{1}s^{2m+2}+\cdots +\left( -1\right)
\mu _{2m}s+1\right) =0.
\end{multline*}%
Thus, $\frac{1}{v_{0}},\frac{1}{v_{1}},\ldots ,\frac{1}{v_{2m+1}},\frac{1}{%
v_{2m+2}}$ are on the root loci of the characteristic equation of the closed
loop transfer function given in (16).

$\left. {}\right. $

An example will be provided in Section 4 to show the effectiveness of
Algorithm 5.

$\left. {}\right. $

\begin{remark}
By using the similar idea, one can try to derive an algorithm for computing
the control inputs to achieve the state transition for the nearly
controllable bilinear systems studied in [19]. Furthermore, although the
result obtained in [19], namely Theorem 1 in [19], seems similar to Theorem
2, it cannot yield Theorem 2 (vice versa). To see this, for $\xi ,\eta $ of
the system (2) that belong to (15), from Theorem 1 in [19] we can have $%
u(0),u(1),\ldots ,u(L_{1}-1),u\left( L_{1}\right) $ such that%
\begin{equation*}
\left( B^{-1}+u\left( L_{1}\right) I\right) \left( B^{-1}+u\left(
L_{1}-1\right) I\right) \cdots \left( B^{-1}+u\left( 1\right) I\right)
\left( B^{-1}+u\left( 0\right) I\right) \xi =B^{-L_{2}-1}\eta
\end{equation*}%
where $B$ satisfies the conditions in Theorem 2 and $\xi ,$ $%
B^{-L_{2}-1}\eta $ are the corresponding initial and terminal states,
respectively. This implies%
\begin{equation*}
\left( I+u\left( L_{1}\right) B\right) \left( I+u\left( L_{1}-1\right)
B\right) \cdots \left( I+u\left( 1\right) B\right) \left( I+u\left( 0\right)
B\right) \xi =B^{L_{1}-L_{2}}\eta .
\end{equation*}%
Unfortunately, we can transfer $\xi $ to $B^{L_{1}-L_{2}}\eta $ but not to $%
\eta $ since we do not have $L_{1}=L_{2}$ either from Theorem 1 in [19] or
from Theorem 2.
\end{remark}

$\left. {}\right. $

\begin{remark}
Note that the opposite numbers of $B$'s eigenvalues are poles of $G\left(
s\right) $ given in (16), which are all real so that it is possible for the
root loci of $1+KG\left( s\right) =0$ to first move on the real axis.
However, if $B$ has complex eigenvalues, then $G\left( s\right) $ may have
complex poles and some of the root loci of $1+KG\left( s\right) =0$ will not
start at the real axis. In such a case, it is rather difficult to ensure
that all the root loci can have the same moment moving on the real axis,
then the real control inputs that achieve the state transition cannot be
obtained. Therefore, for near-controllability of system (2) with $B$ having
complex eigenvalues, we need to develop the proposed root locus approach or
to find a new method.
\end{remark}

$\left. {}\right. $

\section{Nearly-controllable subspaces and Near-controllability index}

Consider the system%
\begin{equation}
x\left( k+1\right) =\left( I+u\left( k\right) B\right) x\left( k\right)
=\left( I+u\left( k\right) \left[
\begin{array}{ccc}
\lambda & 1 & 0 \\
0 & \lambda & 1 \\
0 & 0 & \lambda%
\end{array}%
\right] \right) x\left( k\right)
\end{equation}%
where $x\left( k\right) \in
%TCIMACRO{\U{211d} }%
%BeginExpansion
\mathbb{R}
%EndExpansion
^{3}$, $u\left( k\right) \in
%TCIMACRO{\U{211d} }%
%BeginExpansion
\mathbb{R}
%EndExpansion
$, and $\lambda \neq 0$. Since $B$ has a Jordan block with dimension greater
than two, system (20) is non-nearly controllable according to Theorem 2.
Nevertheless, consider system (20) on region%
\begin{equation}
\left\{ \xi =\left[
\begin{array}{ccc}
\xi _{1} & \xi _{2} & 0%
\end{array}%
\right] ^{T}\right\} .
\end{equation}%
It can be seen that the system is invariant on (21), i.e.%
\begin{eqnarray}
x_{1,2}(k+1) &=&\left( I+u\left( k\right) B_{1,2}\right) x_{1,2}(k)=\left(
I+u\left( k\right) \left[
\begin{array}{cc}
\lambda & 1 \\
0 & \lambda%
\end{array}%
\right] \right) x_{1,2}(k), \\
x_{3}(k+1) &=&0  \notag
\end{eqnarray}%
where $x_{1,2}(k)=\left[
\begin{array}{cc}
x_{1}(k) & x_{2}(k)%
\end{array}%
\right] ^{T}$ and $B_{1,2}$\ is the main submatrix of $B$\ by taking the
entries of $B$ in both rows $1,2$\ and columns $1,2$. From Theorem 2,
subsystem (22) is nearly controllable. More specifically, it is controllable
on%
\begin{equation}
%TCIMACRO{\U{211d} }%
%BeginExpansion
\mathbb{R}
%EndExpansion
^{2}\setminus \left\{ \xi _{1,2}\left\vert \text{ }\left\vert
\begin{array}{cc}
\xi _{1,2} & B_{1,2}\xi _{1,2}%
\end{array}%
\right\vert =0\right. \right\} =\left\{ \xi _{1,2}=\left[
\begin{array}{cc}
\xi _{1} & \xi _{2}%
\end{array}%
\right] ^{T}\left\vert \text{ }\xi _{2}\neq 0\right. \right\} .
\end{equation}%
Therefore, system (20) is controllable on%
\begin{equation}
\left\{ \xi _{1,2}=\left[
\begin{array}{cc}
\xi _{1} & \xi _{2}%
\end{array}%
\right] ^{T}\left\vert \text{ }\xi _{2}\neq 0\right. \right\} \otimes
\left\{ \xi _{3}=0\right\} =\left\{ \xi =\left[
\begin{array}{ccc}
\xi _{1} & \xi _{2} & 0%
\end{array}%
\right] ^{T}\left\vert \text{ }\xi _{2}\neq 0\right. \right\}
\end{equation}%
which is a two-dimensional region in $%
%TCIMACRO{\U{211d} }%
%BeginExpansion
\mathbb{R}
%EndExpansion
^{3}$. Similarly, one can deduce that the one-dimensional region%
\begin{equation*}
\left\{ \xi =\left[
\begin{array}{ccc}
\xi _{1} & 0 & 0%
\end{array}%
\right] ^{T}\left\vert \text{ }\xi _{1}\neq 0\right. \right\}
\end{equation*}%
is also a region on which system (20) is controllable.

$\left. {}\right. $

In this section, we study the non-nearly controllable systems. It is well
known that if a linear time-invariant system is uncontrollable, then the
state space can be decomposed as a direct sum of a controllable subspace and
an uncontrollable subspace. On the controllable subspace the linear system
is controllable. Furthermore, the controllable subspace corresponds to a
controllable linear subsystem. Inspired by these facts, we will derive
nearly-controllable subspaces of the system (2) by using Theorem 2.
Actually, the regions in (23) and (24) are nearly-controllable subspaces of
system (20). We do not use \textquotedblleft controllable
subspace\textquotedblright\ here since a subspace is in general a linear
space, while a nearly-controllable subspace does not contain the zero
element (the origin is an isolated state of system (1) that once it is
reached, then the system cannot be steered away).

$\left. {}\right. $

\begin{definition}
A nearly-controllable subspace of the system (2) is a controllable region of
the system (2) that is derived from the corresponding nearly controllable
subsystem of the system (2) on the region.
\end{definition}

$\left. {}\right. $

By the following Lemma, we will show how to obtain a nearly-controllable
subspace.

$\left. {}\right. $

\begin{lemma}
Consider the system%
\begin{equation}
x\left( k+1\right) =\left( I+u\left( k\right) B\right) x\left( k\right)
=\left( I+u\left( k\right) \left[
\begin{array}{cccc}
\lambda & 1 &  &  \\
& \lambda & \ddots &  \\
&  & \ddots & 1 \\
&  &  & \lambda%
\end{array}%
\right] \right) x\left( k\right)
\end{equation}%
where $x\left( k\right) \in
%TCIMACRO{\U{211d} }%
%BeginExpansion
\mathbb{R}
%EndExpansion
^{n}$, $n\geq 3$, $u\left( k\right) \in
%TCIMACRO{\U{211d} }%
%BeginExpansion
\mathbb{R}
%EndExpansion
$, and $\lambda \neq 0$. Then, any state in a controllable region of system
(25), which contains more than one state, has the property that only the
first two entries can be nonzero.
\end{lemma}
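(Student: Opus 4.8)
The plan is to use the fact that, since $B$ is a single Jordan block, the reachable set of system (25) from any point is extremely constrained. Write $B=\lambda I+N$ with $N$ the nilpotent shift and set $a_k:=1+\lambda u(k)$, $b_k:=u(k)$; then the bottom coordinates obey $x_n(k+1)=a_kx_n(k)$, $x_{n-1}(k+1)=a_kx_{n-1}(k)+b_kx_n(k)$, $x_{n-2}(k+1)=a_kx_{n-2}(k)+b_kx_{n-1}(k)$, and, more generally, every matrix $I+u(k)B$ is upper triangular, so each coordinate subspace $\{x:x_{j+1}=\cdots=x_n=0\}$ is invariant. Hence if a state $\xi$ has all entries of index $>j^{*}$ equal to zero, so does every state reachable from $\xi$; in particular every state of a controllable region containing $\xi$ lies in that subspace, on which the dynamics is again system (25) with $n$ replaced by $j^{*}$ (the top-left $j^{*}\times j^{*}$ block of $B$ being the $j^{*}$-dimensional Jordan block).

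First I would reduce to the last coordinate. Suppose, for contradiction, that a controllable region $R$ with $|R|>1$ contains a state $\xi$ having $\xi_j\neq0$ for some $j\geq3$, and let $j^{*}\geq3$ be the largest such index (the case $\xi=0$ is trivial and excluded, the origin being a fixed point). By the invariance remark, $R$ sits inside $\{x:x_{j^{*}+1}=\cdots=x_n=0\}$ and is a controllable region of the $p$-dimensional version of (25) with $p:=j^{*}\geq3$. Every $\eta\in R$ must satisfy $\eta_p\neq0$: otherwise $x_p(k+1)=a_kx_p(k)$ forces the last coordinate of every trajectory issuing from $\eta$ to be identically zero, so $\xi$ (whose last entry is nonzero) is unreachable from $\eta$, contradicting controllability of $R$. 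Consequently every admissible trajectory joining two points of $R$ remains in $\{x_p\neq0\}$, and it remains only to prove: in system (25) of dimension $p\geq3$, a state with nonzero last coordinate cannot lie in a controllable region having more than one point.

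The crux is a monotone quantity. On $\{x_p\neq0\}$ put $r_i:=x_i/x_p$ and $c_k:=b_k/a_k$ (note $a_k\neq0$ along such trajectories); then $r_p\equiv1$, $r_{p-1}(k+1)=r_{p-1}(k)+c_k$, and $r_{p-2}(k+1)=r_{p-2}(k)+c_kr_{p-1}(k)$, the index $p-2\geq1$ being meaningful precisely because $p\geq3$. A one-line computation gives, for $Q:=r_{p-1}^{2}-2r_{p-2}$, the recursion $Q(k+1)=Q(k)+c_k^{2}$; equivalently the state function $Q(x)=(x_{p-1}^{2}-2x_{p-2}x_p)/x_p^{2}$ is nondecreasing along every trajectory in $\{x_p\neq0\}$, and stays constant only when every applied control is zero, i.e. only along a constant trajectory. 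Now for any $\xi,\eta\in R$, steering $\xi\to\eta$ gives $Q(\eta)\geq Q(\xi)$ and steering $\eta\to\xi$ gives $Q(\xi)\geq Q(\eta)$, so $Q(\xi)=Q(\eta)$, whence the steering trajectory from $\xi$ to $\eta$ uses only zero controls and $\eta=\xi$. Thus $R=\{\xi\}$, contradicting $|R|>1$, which finishes the proof.

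I expect the only genuine difficulty to be discovering the invariant $Q$ (equivalently, verifying the identity $Q(M\xi)=Q(\xi)+\sum_kc_k^{2}$ for products $M$ of the $I+u(k)B$); after that the argument is essentially bookkeeping — checking invariance of the coordinate subspaces, the reduction to the last coordinate, and the fact that the last coordinate cannot vanish along trajectories internal to $R$.
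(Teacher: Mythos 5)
Your proof is correct, and it takes a genuinely different route from the paper's. The paper argues by a round trip through the product structure: if a controllable region contains $\bar\xi$ with $\xi_j\neq 0$ for some $j\geq 3$ together with a second state, then some product $\prod_k(I+u(k)B)$ fixes $\bar\xi$; the upper-triangular Toeplitz form of that product combined with $\xi_j\neq 0$ forces its leading $j\times j$ block to equal $I$, which contradicts the necessity half of Lemma~1 (itself proved via Vi\`{e}te's formulas and an appeal to Lemma~5 of [19] on complex roots). You instead exhibit an explicit monotone quantity $Q(x)=(x_{p-1}^{2}-2x_{p-2}x_p)/x_p^{2}$ on $\{x_p\neq 0\}$, which increases by exactly $\sum_k\bigl(u(k)/(1+\lambda u(k))\bigr)^{2}$ along any trajectory there, so a round trip forces every control to vanish. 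Your preliminary reductions (invariance of the coordinate flags under the upper-triangular factors, and the fact that the last relevant coordinate cannot vanish anywhere in the region, hence $1+\lambda u(k)\neq 0$ along the connecting trajectories) are sound and justify applying $Q$ globally. What your route buys is self-containedness and a sharper mechanism: it bypasses Lemma~1 and the external reference entirely, and the same invariant applied to the loop through $e_N$ gives an independent elementary proof of the necessity part of Lemma~1 itself, since $\prod_k(I+u(k)B)=I$ forces the net increment $\sum_k c_k^{2}$ of $Q$ to vanish and hence all $u(k)=0$. What the paper's route buys is economy within its own architecture, since Lemma~1 is already needed elsewhere and Lemma~9 then follows with little extra work.
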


$\left. {}\right. $

The proof of this lemma can be found in Appendix.

$\left. {}\right. $

\begin{theorem}
Consider the system (2) with $B$ in Jordan canonical form%
\begin{equation*}
\left[
\begin{array}{ccc}
J\left( \lambda _{1}\right) &  &  \\
& \ddots &  \\
&  & J\left( \lambda _{m}\right)%
\end{array}%
\right]
\end{equation*}%
without loss of generality, where $\lambda _{i}\neq \lambda _{j}$ if $i\neq
j $ and $J\left( \lambda _{i}\right) $ is the Jordan matrix associated with
eigenvalue $\lambda _{i}$ for $i=1,\ldots ,m$. Let $h$ be the dimension of $%
B $'s largest main submatrix that is nonsingular, cyclic, and does not have
a Jordan block with dimension greater than two. Then, the system (2) has an $%
e$-dimensional nearly-controllable subspace for $e=1,\ldots ,h$.
\end{theorem}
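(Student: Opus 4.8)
The plan is to build the claimed $e$-dimensional nearly-controllable subspace explicitly, for each $e$ with $1\le e\le h$, by exhibiting a main submatrix of $B$ of dimension $e$ that inherits the hypotheses of Theorem 2 and then invoking Theorem 2 on the associated invariant subsystem. First I would fix the coordinates in which $B$ is the block-diagonal Jordan form $\mathrm{diag}(J(\lambda_1),\ldots,J(\lambda_m))$. By definition of $h$, there is an $h\times h$ main submatrix $B_h$ of $B$ — obtained by selecting, for each eigenvalue, at most the leading $1\times 1$ or $2\times 2$ corner of a single Jordan block — that is nonsingular, cyclic, and has no Jordan block of size exceeding two; this $B_h$ is itself in Jordan canonical form and satisfies exactly the hypothesis of Theorem 2.

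The key observation is that for any $1\le e\le h$ one can pass to an $e\times e$ main submatrix $B_e$ of $B_h$ that still satisfies the Theorem 2 hypotheses: dropping eigenvalue blocks entirely preserves nonsingularity, cyclicity and the Jordan-block-size bound, and — using precisely the same fact exploited at the end of the proof of Lemma 1, namely that $(I+u(L)\left[\begin{smallmatrix}\lambda&1\\0&\lambda\end{smallmatrix}\right])\cdots(I+u(0)\left[\begin{smallmatrix}\lambda&1\\0&\lambda\end{smallmatrix}\right])=I$ and its $1\times 1$ analogue coexist — one may when necessary shrink a single $2\times 2$ block to $1\times 1$ to land on exactly dimension $e$. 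Thus for each $e$ there is a choice of $e$ coordinate directions, spanning a coordinate subspace $V_e\subset\mathbb{R}^n$, on which the dynamics restrict to $x_{V_e}(k+1)=(I+u(k)B_e)x_{V_e}(k)$ with the complementary coordinates frozen at zero (invariance of $V_e$ under every $I+u(k)B$ is immediate from the block structure). Theorem 2 applied to $B_e$ then says this subsystem is nearly controllable, i.e.\ controllable on $V_e\setminus\{\xi\mid |\,\xi\;B_e\xi\;\cdots\;B_e^{e-1}\xi\,|=0\}$, which is a controllable region of the full system (2) lying inside $V_e$ and containing more than one state. By Definition~6 this region is an $e$-dimensional nearly-controllable subspace of (2), completing the construction.

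The main obstacle is the bookkeeping in the second step: one must argue carefully that the corner-submatrix selections can be made compatibly as $e$ ranges over $1,\ldots,h$ and that in every case the resulting $B_e$ is genuinely a \emph{main} submatrix of $B$ (same rows as columns), is in Jordan form, and meets all three Theorem 2 conditions simultaneously — in particular that shrinking a $2\times 2$ Jordan block to $1\times 1$ does not accidentally destroy cyclicity by duplicating an eigenvalue, which is why one removes whole blocks for the other eigenvalues rather than truncating several of them. Once the submatrix $B_e$ is in hand, the invariance of $V_e$ and the final appeal to Theorem 2 are routine, and the zero-measure exceptional set for $B_e$ guarantees the controllable region is a proper, nonempty, more-than-one-point subset of $V_e$, as required by the notion of nearly-controllable subspace.
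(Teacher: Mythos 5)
Your proposal is correct and follows essentially the same route as the paper: select an $e$-dimensional main submatrix of $B$ built from the leading $1\times1$ or $2\times2$ corners of single Jordan blocks (one per eigenvalue, which is exactly the paper's constraint imposed ``in view of Lemma 9''), observe that the corresponding coordinate subspace is invariant with the remaining coordinates frozen at zero, and apply Theorem 2 to the resulting subsystem to obtain the controllable region $V_e\setminus\{\xi\mid\lvert\,\xi\ B_e\xi\ \cdots\ B_e^{e-1}\xi\,\rvert=0\}$. Your extra bookkeeping about shrinking or dropping blocks to realize every dimension $e=1,\ldots,h$ is a slightly more explicit version of the paper's ``choose any main submatrix'' step, not a different argument.
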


\begin{proof}
Choose any main submatrix $B_{i_{1},\ldots ,i_{e}}$ of $B$ that is
nonsingular, cyclic, and does not have a Jordan block with dimension greater
than two, where $i_{j}$ is required to correspond to the first or the second
row (and column) of a Jordan block in $B$ in view of Lemma 9 for $j=1,\ldots
,e$ and if some $i_{j}$ corresponds to the second row (and column) of a
Jordan block, then $i_{j-1}$ corresponds to the first row (and column) of
the same one. Consider the system (2) on%
\begin{equation*}
\left\{ \xi =\left[
\begin{array}{ccccccccccc}
0 & \cdots & 0 & \xi _{i_{1}} & 0 & \cdots & 0 & \xi _{i_{e}} & 0 & \cdots &
0%
\end{array}%
\right] ^{T}\right\} .
\end{equation*}%
Then, the system can be rewritten as%
\begin{eqnarray}
x_{i_{1},\ldots ,i_{e}}(k+1) &=&\left( I+u\left( k\right) B_{i_{1},\ldots
,i_{e}}\right) x_{i_{1},\ldots ,i_{e}}(k), \\
x_{1,\ldots ,i_{1}-1,i_{1}+1,\ldots ,i_{e}-1,i_{e}+1,\ldots ,n}(k+1) &=&%
\mathbf{0}  \notag
\end{eqnarray}%
where%
\begin{equation*}
x_{i_{1},\ldots ,i_{e}}(k)=\left[
\begin{array}{cccc}
x_{i_{1}}(k) & x_{i_{2}}(k) & \cdots & x_{i_{e}}(k)%
\end{array}%
\right] ^{T}
\end{equation*}%
and%
\begin{multline*}
x_{1,\ldots ,i_{1}-1,i_{1}+1,\ldots ,i_{e}-1,i_{e}+1,\ldots ,n}(k) \\
=\left[
\begin{array}{ccccccccc}
x_{1}(k) & \cdots & x_{i_{1}-1}(k) & x_{i_{1}+1}(k) & \cdots & x_{i_{e}-1}(k)
& x_{i_{e}+1}(k) & \cdots & x_{n}(k)%
\end{array}%
\right] ^{T}.
\end{multline*}%
Due to the chosen $B_{i_{1},\ldots ,i_{e}}$, subsystem (26) is nearly
controllable by Theorem 2. Thus, the system (2) is controllable on the $e $%
-dimensional region%
\begin{equation}
\left\{ \xi =\left[
\begin{array}{ccccccccccc}
0 & \cdots & 0 & \xi _{i_{1}} & 0 & \cdots & 0 & \xi _{i_{e}} & 0 & \cdots &
0%
\end{array}%
\right] ^{T}\right\} \setminus H
\end{equation}%
where%
\begin{multline*}
H=\left\{ \left. \xi =\left[
\begin{array}{ccccccccccc}
0 & \cdots & 0 & \xi _{i_{1}} & 0 & \cdots & 0 & \xi _{i_{e}} & 0 & \cdots &
0%
\end{array}%
\right] ^{T}\right\vert \right. \\
\left. \left\vert
\begin{array}{cccc}
\xi _{i_{1},\ldots ,i_{e}} & B_{i_{1},\ldots ,i_{e}}\xi _{i_{1},\ldots
,i_{e}} & \cdots & B_{i_{1},\ldots ,i_{e}}^{e-1}\xi _{i_{1},\ldots ,i_{e}}%
\end{array}%
\right\vert =0,\text{ }\xi _{i_{1},\ldots ,i_{e}}=\left[
\begin{array}{ccc}
\xi _{i_{1}} & \cdots & \xi _{i_{e}}%
\end{array}%
\right] ^{T}\right\} .
\end{multline*}%
That is, region (27) is an $e$-dimensional nearly-controllable subspace of
the system (2).
\end{proof}

$\left. {}\right. $

\begin{remark}
From the proof of Theorem 10, in order to obtain a nearly-controllable
subspace of the system (2), the main submatrix $B_{i_{1},\ldots ,i_{e}}$ of $%
B$ that the nearly-controllable subspace corresponds to must be chosen to be
nonsingular, cyclic, and have no Jordan block with dimension greater than
two, where $i_{j}$ is required to correspond to the first or the second row
(and column) of a Jordan block in $B$ for $j=1,\ldots ,e$ and if some $i_{j}$
corresponds to the second row (and column) of a Jordan block, then $i_{j-1}$
corresponds to the first row (and column) of the same one.
\end{remark}

$\left. {}\right. $

\begin{definition}
$h$\ in Theorem 10 is called the near-controllability index of the system
(2).
\end{definition}

$\left. {}\right. $

If $h=n$, then the system (2) is nearly controllable. Otherwise, we can have
the nearly-controllable subspaces with dimension from $1$ to $h$. In
particular, the state transition on every nearly-controllable subspace can
also be achieved through Algorithm 5. Additionally, even for the nearly
controllable system (2), it has nearly-controllable subspaces in the removed
region of (15) (i.e. $\mathcal{E}$) and (15) can be regarded as an $n$%
-dimensional nearly-controllable subspace. An example is given in the next
section.

$\left. {}\right. $

\section{Examples}

\begin{example}
Consider the system%
\begin{equation}
x\left( k+1\right) =\left( I+u\left( k\right) B\right) x\left( k\right)
=\left( I+u\left( k\right) \left[
\begin{array}{ccc}
2 & 1 & -5 \\
0 & 2 & -4 \\
0 & 0 & -2%
\end{array}%
\right] \right) x\left( k\right)
\end{equation}%
where $x\left( k\right) \in
%TCIMACRO{\U{211d} }%
%BeginExpansion
\mathbb{R}
%EndExpansion
^{3}$ and $u\left( k\right) \in
%TCIMACRO{\U{211d} }%
%BeginExpansion
\mathbb{R}
%EndExpansion
$. Given $\xi =\left[
\begin{array}{ccc}
0 & 0 & -1%
\end{array}%
\right] ^{T},$ $\eta =\left[
\begin{array}{ccc}
27 & 21 & 12%
\end{array}%
\right] ^{T}$. Find the control inputs such that $\xi $ is transferred to $%
\eta $.
\end{example}

$\left. {}\right. $

We now apply Algorithm 5 to compute the required control inputs. \textbf{%
Step 1}: let $\bar{x}\left( k\right) =Px\left( k\right) $ where%
\begin{equation*}
P=\left[
\begin{array}{ccc}
1 & -1 & 0 \\
0 & 1 & -1 \\
0 & 0 & 1%
\end{array}%
\right] .
\end{equation*}%
Then,%
\begin{equation}
\bar{x}\left( k+1\right) =\left( I+u\left( k\right) PBP^{-1}\right) \bar{x}%
\left( k\right) =\left( I+u\left( k\right) \left[
\begin{array}{ccc}
2 & 1 & 0 \\
0 & 2 & 0 \\
0 & 0 & -2%
\end{array}%
\right] \right) \bar{x}\left( k\right)
\end{equation}%
and $P\xi =\left[
\begin{array}{ccc}
0 & 1 & -1%
\end{array}%
\right] ^{T},$ $P\eta =\left[
\begin{array}{ccc}
6 & 9 & 12%
\end{array}%
\right] ^{T}$. From Theorem 2, system (29) is nearly controllable (so is
system (28)) and is controllable on%
\begin{eqnarray*}
&&%
%TCIMACRO{\U{211d} }%
%BeginExpansion
\mathbb{R}
%EndExpansion
^{3}\setminus \left\{ \xi \left\vert \text{ }\left\vert
\begin{array}{ccc}
\xi & PBP^{-1}\xi & \left( PBP^{-1}\right) ^{2}\xi%
\end{array}%
\right\vert =0\right. \right\} \\
&=&%
%TCIMACRO{\U{211d} }%
%BeginExpansion
\mathbb{R}
%EndExpansion
^{3}\setminus \left\{ \xi =\left[
\begin{array}{ccc}
\xi _{1} & \xi _{2} & \xi _{3}%
\end{array}%
\right] ^{T}\left\vert \text{ }\xi _{2}\xi _{3}=0\right. \right\} \\
&=&\left\{ \xi \left\vert \text{ }\xi _{2}>0,\xi _{3}>0\right. \right\} \cup
\left\{ \xi \left\vert \text{ }\xi _{2}<0,\xi _{3}>0\right. \right\} \cup
\left\{ \xi \left\vert \text{ }\xi _{2}<0,\xi _{3}<0\right. \right\} \cup
\left\{ \xi \left\vert \text{ }\xi _{2}>0,\xi _{3}<0\right. \right\}
\end{eqnarray*}%
that consists of four open orthants. \textbf{Step 2}: since $P\xi ,P\eta $
belong to different orthants, let $u\left( 0\right) =1$. It follows%
\begin{equation*}
\left( I+u\left( 0\right) PBP^{-1}\right) P\xi =\left[
\begin{array}{ccc}
1 & 3 & 1%
\end{array}%
\right] ^{T}\triangleq \zeta
\end{equation*}%
which is in the orthant that $P\eta $ belongs to. \textbf{Step 3}: obtain
from (14) the transition matrix%
\begin{equation*}
T_{\zeta \rightarrow P\eta }=\left[
\begin{array}{ccc}
3 & 1 & 0 \\
0 & 3 & 0 \\
0 & 0 & 12%
\end{array}%
\right] .
\end{equation*}%
\textbf{Step 4}: choose $\lambda _{3}=1,\lambda _{4}=-4$ in view of (9).
\textbf{Step 5}: choose $q=4$. Then,%
\begin{equation*}
T_{\zeta \rightarrow P\eta }^{\frac{1}{4}}=\left[
\begin{array}{ccc}
3^{\frac{1}{4}} & \frac{1}{4\times 3^{\frac{3}{4}}} & 0 \\
0 & 3^{\frac{1}{4}} & 0 \\
0 & 0 & 12^{\frac{1}{4}}%
\end{array}%
\right] .
\end{equation*}%
\textbf{Step 6}: by (17)%
\begin{equation*}
\mu _{1}\approx 0.000742,\text{ }\mu _{2}\approx -0.00511,\text{ }\mu
_{3}\approx -0.0393,\text{ }\mu _{4}\approx -0.0648,\text{ }\mu _{5}\approx
0.293,\text{ }\mu _{6}\approx 0.314.
\end{equation*}%
Consider%
\begin{equation*}
1+KG\left( s\right) =1+\frac{K\left( \left( -1\right) ^{6}\mu
_{1}s^{6}+\cdots +\left( -1\right) \mu _{6}s+1\right) }{s\left( s+2\right)
^{2}\left( s-2\right) ^{2}\left( s+1\right) \left( s-4\right) }.
\end{equation*}%
By Matlab the root loci of $1+KG\left( s\right) =0$ are shown in the
following figure,
\begin{figure}[h!]
\begin{center}
\includegraphics[scale=0.808,trim=0 0 0 0]{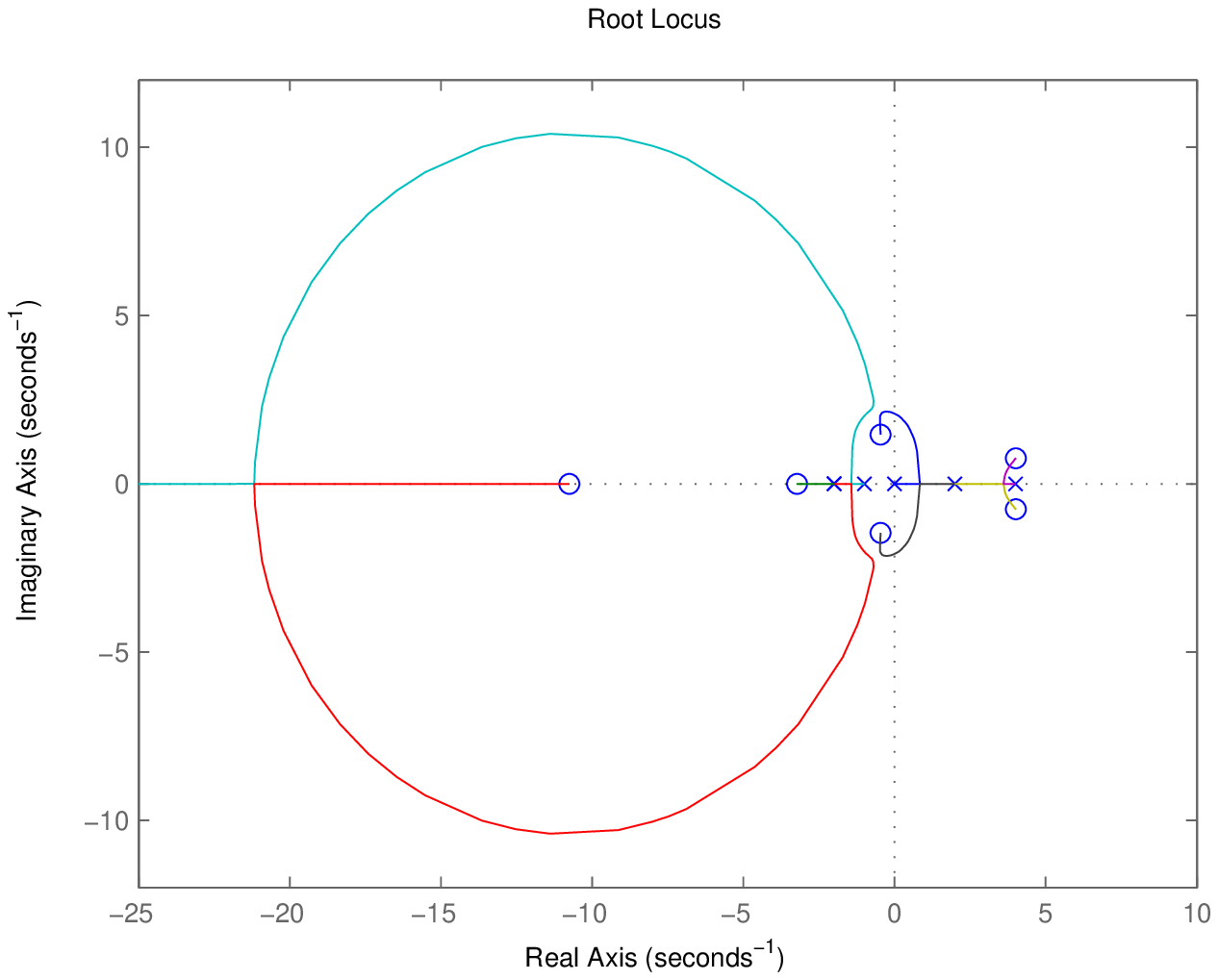} \\[0pt]
{\fontsize{11pt}{11.6pt} \selectfont Fig. 1. Root loci of $1+KG\left(
s\right) =0$}
\end{center}
\end{figure}
where \textquotedblleft $\times $\textquotedblright\ and \textquotedblleft $%
\circ $\textquotedblright\ respectively denote the poles and zeros of $%
G\left( s\right) $ and the colored curves are the root loci of $1+KG\left(
s\right) =0$ starting at the poles and ending at the zeros. From Fig. 1, the
root loci of $1+KG\left( s\right) =0$ first move on the real axis. Choose $%
K=10$ and obtain the roots of $1+KG\left( s\right) =0$. Computing the
reciprocals of the roots yields%
\begin{equation*}
v_{1}\approx -0.770,\text{ }v_{2}\approx -0.643,\text{ }v_{3}\approx -0.452,%
\text{ }v_{4}\approx 0.250,\text{ }v_{5}\approx 0.439,\text{ }v_{6}\approx
0.612,\text{ }v_{7}\approx 6.650.
\end{equation*}

One can now verify that, by $4$ groups of the above control inputs with $%
u\left( 0\right) =1$, system (28) is steered from $\xi $ to $\eta $.

$\left. {}\right. $

\begin{example}
Consider the system%
\begin{equation}
x\left( k+1\right) =\left( I+u\left( k\right) B\right) x\left( k\right)
=\left( I+u\left( k\right) \left[
\begin{array}{cccc}
\lambda _{1} & 1 & 0 & 0 \\
0 & \lambda _{1} & 1 & 0 \\
0 & 0 & \lambda _{1} & 0 \\
0 & 0 & 0 & \lambda _{2}%
\end{array}%
\right] \right) x\left( k\right)
\end{equation}%
where $x\left( k\right) \in
%TCIMACRO{\U{211d} }%
%BeginExpansion
\mathbb{R}
%EndExpansion
^{4}$, $u(k)\in
%TCIMACRO{\U{211d} }%
%BeginExpansion
\mathbb{R}
%EndExpansion
$, and $\lambda _{1},\lambda _{2}$ are nonzero real and distinct. Find its
nearly-controllable subspaces.
\end{example}

$\left. {}\right. $

According to Theorem 2, system (30) is non-nearly controllable. The
dimension of $B$'s largest main submatrix,\ which is nonsingular, cyclic,
and does not have a Jordan block with dimension greater than two, is three,
so the near-controllability index of system (30) $h=3$. More specifically, $%
B_{1,2,4},B_{2,3,4}$ are such main submatrices of $B$. From the proof of
Theorem 10 $B_{1,2,4}$ corresponds to the following three-dimensional
nearly-controllable subspace of system (30)%
\begin{eqnarray}
&&\left\{ \xi =\left[
\begin{array}{cccc}
\xi _{1} & \xi _{2} & 0 & \xi _{4}%
\end{array}%
\right] ^{T}\right\} \setminus \left\{ \xi \left\vert \text{ }\left\vert
\begin{array}{ccc}
\xi _{1,2,4} & B_{1,2,4}\xi _{1,2,4} & B_{1,2,4}^{2}\xi _{1,2,4}%
\end{array}%
\right\vert =0\right. \right\}  \notag \\
&=&\left\{ \xi =\left[
\begin{array}{cccc}
\xi _{1} & \xi _{2} & 0 & \xi _{4}%
\end{array}%
\right] ^{T}\left\vert \text{ }\xi _{2}\xi _{4}\neq 0\right. \right\}
\end{eqnarray}%
where $\xi _{1,2,4}=\left[
\begin{array}{ccc}
\xi _{1} & \xi _{2} & \xi _{4}%
\end{array}%
\right] ^{T}$. However, the region%
\begin{equation*}
\left\{ \xi =\left[
\begin{array}{cccc}
0 & \xi _{2} & \xi _{3} & \xi _{4}%
\end{array}%
\right] ^{T}\left\vert \text{ }\xi _{3}\xi _{4}\neq 0\right. \right\}
\end{equation*}%
that $B_{2,3,4}$ corresponds to is not a nearly-controllable subspace in
view of Lemma 9. Besides (31), system (30) has the two-dimensional
nearly-controllable subspaces%
\begin{equation*}
\left\{ \xi =\left[
\begin{array}{cccc}
\xi _{1} & \xi _{2} & 0 & 0%
\end{array}%
\right] ^{T}\text{ }\left\vert \text{ }\xi _{2}\neq 0\right. \right\} ,\text{
}\left\{ \xi =\left[
\begin{array}{cccc}
\xi _{1} & 0 & 0 & \xi _{4}%
\end{array}%
\right] ^{T}\text{ }\left\vert \text{ }\xi _{1}\xi _{4}\neq 0\right. \right\}
\end{equation*}%
that $B_{1,2},B_{1,4}$ respectively correspond to and the one-dimensional
nearly-controllable subspaces%
\begin{equation*}
\left\{ \xi =\left[
\begin{array}{cccc}
\xi _{1} & 0 & 0 & 0%
\end{array}%
\right] ^{T}\text{ }\left\vert \text{ }\xi _{1}\neq 0\right. \right\} ,\text{
}\left\{ \xi =\left[
\begin{array}{cccc}
0 & 0 & 0 & \xi _{4}%
\end{array}%
\right] ^{T}\text{ }\left\vert \text{ }\xi _{4}\neq 0\right. \right\}
\end{equation*}%
that $B_{1},B_{4}$ respectively\ corresponds to. In all, system (30) has
five nearly-controllable subspaces from Theorem 10.

$\left. {}\right. $

\section{Conclusions}

In this paper, near-controllability of a class of discrete-time bilinear
systems is studied by proposing a root locus approach. A necessary and
sufficient criterion for the systems to be nearly controllable is obtained.
In particular, the control inputs which achieve the state transition for the
nearly controllable systems are computed and an algorithm is presented.
Furthermore, the nearly-controllable subspaces are derived and the
near-controllability index is defined. Two examples are provided to
demonstrate the results of the paper. Future work should consider the
near-controllability and controllability problems of general bilinear
systems and develop the proposed root locus approach.

$\left. {}\right. $

$\left. {}\right. $

\section{Appendix}

$\left. {}\right. $

\begin{lemma}
Let%
\begin{equation*}
C=\left[
\begin{array}{ccc}
\lambda _{1}^{2m+2} & \cdots & \lambda _{1} \\
\left( 2m+2\right) \lambda _{1}^{2m+1} & \cdots & 1 \\
\vdots & \vdots & \vdots \\
\lambda _{m}^{2m+2} & \cdots & \lambda _{m} \\
\left( 2m+2\right) \lambda _{m}^{2m+1} & \cdots & 1 \\
\lambda _{m+1}^{2m+2} & \cdots & \lambda _{m+1} \\
\lambda _{m+2}^{2m+2} & \cdots & \lambda _{m+2}%
\end{array}%
\right] ,\text{ }d=\left[
\begin{array}{c}
-\lambda _{1}^{2m+3} \\
-\left( 2m+3\right) \lambda _{1}^{2m+2} \\
\vdots \\
-\lambda _{m}^{2m+3} \\
-\left( 2m+3\right) \lambda _{m}^{2m+2} \\
-\lambda _{m+1}^{2m+3} \\
-\lambda _{m+2}^{2m+3}%
\end{array}%
\right]
\end{equation*}%
where $\lambda _{1},\ldots ,\lambda _{m+2}$ are nonzero real and pairwise
distinct. Then, the linear equation%
\begin{equation}
Cz=d
\end{equation}%
has a unique solution%
\begin{equation}
z=C^{-1}d=\left[
\begin{array}{c}
\left( -1\right) \left( 2\overset{m}{\underset{i=1}{\sum }}\lambda
_{i}+\lambda _{m+1}+\lambda _{m+2}\right) \\
\overset{}{\underset{}{\vdots }} \\
\left( -1\right) ^{2m+2}\lambda _{m+1}\lambda _{m+2}\overset{m}{\underset{i=1%
}{\prod }}\lambda _{i}^{2}%
\end{array}%
\right] .  \notag
\end{equation}
\end{lemma}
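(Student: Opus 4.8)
The plan is to read the linear system $Cz=d$ as a Hermite interpolation problem for a single polynomial and then to obtain $z$ from Vi\`{e}te's formulas. To a candidate vector $z=[z_1,\dots,z_{2m+2}]^{T}$ associate the monic polynomial of degree $2m+3$ with vanishing constant term
\[ q(s):=s^{2m+3}+z_1 s^{2m+2}+z_2 s^{2m+1}+\cdots+z_{2m+1}s^{2}+z_{2m+2}s . \]
For $i=1,\dots,m$, the two rows of $C$ formed from $\lambda_i$ together with the matching entries of $d$ express exactly $q(\lambda_i)=0$ and $q'(\lambda_i)=0$, obtained by moving the $s^{2m+3}$-term (and its derivative $(2m+3)s^{2m+2}$) to the right-hand side; the last two rows express $q(\lambda_{m+1})=0$ and $q(\lambda_{m+2})=0$. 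Since $q$ automatically vanishes at $s=0$, solving $Cz=d$ is equivalent to demanding that the monic polynomial $q$ of degree $2m+3$ have $\lambda_1,\dots,\lambda_m$ as double roots, $\lambda_{m+1},\lambda_{m+2}$ as simple roots, and $0$ as a simple root.

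First I would prove that $C$ is nonsingular, which yields uniqueness. If $Cz=0$, put $\widetilde q(s):=z_1 s^{2m+2}+\cdots+z_{2m+2}s$, a polynomial of degree at most $2m+2$ divisible by $s$, and write $g:=\widetilde q/s$, so $\deg g\le 2m+1$. Because every $\lambda_i\neq 0$, the conditions $\widetilde q(\lambda_i)=\widetilde q'(\lambda_i)=0$ give $g(\lambda_i)=g'(\lambda_i)=0$ for $i=1,\dots,m$, and $\widetilde q(\lambda_{m+1})=\widetilde q(\lambda_{m+2})=0$ give $g(\lambda_{m+1})=g(\lambda_{m+2})=0$. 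As $\lambda_1,\dots,\lambda_{m+2}$ are nonzero and pairwise distinct, $g$ then has at least $2m+2$ roots counted with multiplicity while $\deg g\le 2m+1$, so $g\equiv 0$ and hence $z=0$. Thus $C$ is invertible and $Cz=d$ has a unique solution (equivalently, the confluent Vandermonde-type determinant $\det C$ is nonzero).

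Next I would exhibit that solution explicitly. The polynomial
\[ q(s)=s\,(s-\lambda_1)^{2}\cdots(s-\lambda_m)^{2}(s-\lambda_{m+1})(s-\lambda_{m+2}) \]
is monic of degree $2m+3$, has zero constant term, and satisfies all the interpolation conditions above by construction; hence its coefficient vector (coefficients of $s^{2m+2}$ down to $s^{1}$) is the unique solution of $Cz=d$. Applying Vi\`{e}te's formulas to the root multiset $\{0,\lambda_1,\lambda_1,\dots,\lambda_m,\lambda_m,\lambda_{m+1},\lambda_{m+2}\}$, the coefficient of $s^{2m+2}$ equals $(-1)\bigl(2\sum_{i=1}^{m}\lambda_i+\lambda_{m+1}+\lambda_{m+2}\bigr)$ and the coefficient of $s^{1}$ equals $(-1)^{2m+2}$ times the product of the nonzero roots, namely $(-1)^{2m+2}\lambda_{m+1}\lambda_{m+2}\prod_{i=1}^{m}\lambda_i^{2}$, while the intermediate entries are the corresponding signed elementary symmetric functions of the same multiset. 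This is exactly the asserted value of $z=C^{-1}d$.

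The argument is essentially routine, so I expect no genuine obstacle. The only points that need care are the index/sign bookkeeping that matches each row of $C$ and each entry of $d$ to the correct value-or-derivative condition on $q$, and making the root-counting step that establishes $\det C\neq 0$ fully rigorous, which is precisely where the hypotheses that $\lambda_1,\dots,\lambda_{m+2}$ are nonzero and pairwise distinct are used.
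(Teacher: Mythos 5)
Your proof is correct and follows essentially the same route as the paper: read $Cz=d$ as Hermite interpolation conditions forcing $\lambda_1,\dots,\lambda_m$ to be double roots and $\lambda_{m+1},\lambda_{m+2}$ simple roots of the associated polynomial, then read off the coefficients by Vi\`{e}te's formulas. The one place you go beyond the paper is the nonsingularity of $C$, which the paper merely asserts from the (confluent) Vandermonde structure while you supply the kernel/root-counting argument — a welcome addition, since the hypothesis that the $\lambda_i$ are nonzero is genuinely needed there and your argument makes that visible.
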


\begin{proof}
Note that the transpose of\textit{\ }$C$ is a general Vandermonde\textit{\ }%
matrix. $C$ is nonsingular since $\lambda _{1},\ldots ,$ $\lambda _{m+2}$
are nonzero and pairwise distinct. Therefore, linear equation (32) has a
unique solution. Assume that $z=\left[
\begin{array}{cccc}
z_{1} & z_{2} & \cdots & z_{2m+2}%
\end{array}%
\right] ^{T}$ is the solution. From (32) we have%
\begin{equation*}
\lambda _{i}^{2m+3}+z_{1}\lambda _{i}^{2m+2}+\cdots +z_{2m+2}\lambda _{i}=0
\end{equation*}%
for $i=1,\ldots ,m+2$ and%
\begin{equation*}
\left( 2m+3\right) \lambda _{i}^{2m+2}+z_{1}\left( 2m+2\right) \lambda
_{i}^{2m+1}+\cdots +z_{2m+2}=0
\end{equation*}%
for $i=1,\ldots ,m$. Since $\lambda _{i}$ are nonzero, it follows%
\begin{equation*}
\lambda _{i}^{2m+2}+z_{1}\lambda _{i}^{2m+1}+\cdots +z_{2m+2}=0
\end{equation*}%
for $i=1,\ldots ,m+2$ and hence%
\begin{eqnarray*}
&&\left( 2m+3\right) \lambda _{i}^{2m+2}+z_{1}\left( 2m+2\right) \lambda
_{i}^{2m+1}+\cdots +z_{2m+2}-\left( \lambda _{i}^{2m+2}+z_{1}\lambda
_{i}^{2m+1}+\cdots +z_{2m+2}\right) \\
&=&\left( 2m+2\right) \lambda _{i}^{2m+2}+z_{1}\left( 2m+1\right) \lambda
_{i}^{2m+1}+\cdots +z_{2m+1}\lambda _{i}=0 \\
&\Rightarrow &\left( 2m+2\right) \lambda _{i}^{2m+1}+z_{1}\left( 2m+1\right)
\lambda _{i}^{2m}+\cdots +z_{2m+1}=0
\end{eqnarray*}%
for $i=1,\ldots ,m$. We can thus conclude that $\lambda _{1},\ldots ,\lambda
_{m+2}$ are all the roots of the following $\left( 2m+2\right) $th-degree
equation%
\begin{equation*}
s^{2m+2}+z_{1}s^{2m+1}+\cdots +z_{2m+2}=0\text{,}
\end{equation*}%
in which $\lambda _{1},\ldots ,\lambda _{m}$ are double roots. By the Vi\`{e}%
te's formulas,%
\begin{eqnarray*}
z_{1} &=&\left( -1\right) \left( 2\overset{m}{\underset{i=1}{\sum }}\lambda
_{i}+\lambda _{m+1}+\lambda _{m+2}\right) , \\
&&\vdots \\
z_{2m+2} &=&\left( -1\right) ^{2m+2}\lambda _{m+1}\lambda _{m+2}\overset{m}{%
\underset{i=1}{\prod }}\lambda _{i}^{2}.
\end{eqnarray*}
\end{proof}

$\left. {}\right. $

\begin{proof}[Proof of Necessity of Lemma 1]
We only need to show that there do not exist nonzero real numbers for (3)
when%
\begin{equation*}
B=\left[
\begin{array}{cccc}
\lambda & 1 &  &  \\
& \lambda & \ddots &  \\
&  & \ddots & 1 \\
&  &  & \lambda%
\end{array}%
\right] \in
%TCIMACRO{\U{211d} }%
%BeginExpansion
\mathbb{R}
%EndExpansion
^{N\times N}
\end{equation*}%
where $N\geq 3$ and $\lambda \neq 0$. Reduction to absurdity: assume that
such nonzero real numbers exist for (3) with the above given $B$. From (3),%
\begin{equation}
\left( \frac{1}{u\left( L\right) }I+B\right) \left( \frac{1}{u\left(
L-1\right) }I+B\right) \cdots \left( \frac{1}{u\left( 1\right) }I+B\right)
\left( \frac{1}{u\left( 0\right) }I+B\right) =\overset{L}{\underset{k=0}{%
\prod }}\frac{1}{u\left( k\right) }I.
\end{equation}%
It follows%
\begin{eqnarray}
B^{L+1}+\overset{L}{\underset{k=0}{\sum }}\frac{1}{u\left( k\right) }%
B^{L}+\cdots +\overset{L}{\underset{k=0}{\sum }}\frac{u\left( k\right) }{%
\overset{2m+2}{\underset{j=0}{\prod }}u\left( j\right) }B &=&0\Rightarrow
\notag \\
B^{L}+\overset{L}{\underset{k=0}{\sum }}\frac{1}{u\left( k\right) }%
B^{L-1}+\cdots +\overset{L}{\underset{k=0}{\sum }}\frac{u\left( k\right) }{%
\overset{2m+2}{\underset{j=0}{\prod }}u\left( j\right) }I &=&0,
\end{eqnarray}%
which implies $L\geq N$. Otherwise the $\left( 1,L+1\right) $th entry in the
sum of the left side of matrix equation (34), which is determined by $B^{L}$%
, does not vanish. Next, let $w\left( k\right) =\frac{1}{u\left( k\right) }%
+\lambda $ for $k=0,1,\ldots ,L$. From (33),%
\begin{equation*}
\left( w\left( L\right) I+\bar{B}\right) \left( w\left( L-1\right) I+\bar{B}%
\right) \cdots \left( w\left( 1\right) I+\bar{B}\right) \left( w\left(
0\right) I+\bar{B}\right) =\overset{L}{\underset{k=0}{\prod }}\left( w\left(
k\right) -\lambda \right) I
\end{equation*}%
where%
\begin{equation}
\bar{B}=\left[
\begin{array}{cccc}
0 & 1 &  &  \\
& 0 & \ddots &  \\
&  & \ddots & 1 \\
&  &  & 0%
\end{array}%
\right] .  \notag
\end{equation}%
We have%
\begin{align}
& \bar{B}^{L+1}+\overset{L}{\underset{k=0}{\sum }}w\left( k\right) \bar{B}%
^{L}+\cdots +\overset{L}{\underset{k=0}{\sum }}\frac{\overset{L}{\underset{%
j=0}{\prod }}w\left( j\right) }{w\left( k\right) }\bar{B}+\overset{L}{%
\underset{k=0}{\prod }}w\left( k\right) I  \notag \\
& =\sum w\left( k_{0}\right) w\left( k_{1}\right) \cdots w\left(
k_{L-N+1}\right) \bar{B}^{N-1}+\sum w\left( k_{0}\right) w\left(
k_{1}\right) \cdots w\left( k_{L-N+2}\right) \bar{B}^{N-2}+  \notag \\
& \cdots +\overset{L}{\underset{k=0}{\sum }}\frac{\overset{L}{\underset{j=0}{%
\prod }}w\left( j\right) }{w\left( k\right) }\bar{B}+\overset{L}{\underset{%
k=0}{\prod }}w\left( k\right) I  \notag \\
& =\overset{L}{\underset{k=0}{\prod }}\left( w\left( k\right) -\lambda
\right) I
\end{align}%
since $\bar{B}^{j}=0$ for $j\geq N$. Therefore, from (35)%
\begin{eqnarray}
\sum w\left( k_{0}\right) w\left( k_{1}\right) \cdots w\left(
k_{L-N+1}\right) &=&0,  \notag \\
&&\vdots  \notag \\
\overset{L}{\underset{k=0}{\sum }}\frac{\overset{L}{\underset{j=0}{\prod }}%
w\left( j\right) }{w\left( k\right) } &=&0,  \notag \\
\overset{L}{\underset{k=0}{\prod }}w\left( k\right) &=&\overset{L}{\underset{%
k=0}{\prod }}\left( w\left( k\right) -\lambda \right) .
\end{eqnarray}%
Let%
\begin{eqnarray}
\overset{L}{\underset{k=0}{\sum }}w\left( k\right) &=&\left( -1\right) c_{1},
\notag \\
&&\underset{}{\vdots }  \notag \\
\sum w\left( k_{0}\right) w\left( k_{1}\right) \cdots w\left( k_{L-N}\right)
&=&\left( -1\right) ^{L-N+1}c_{L-N+1}.
\end{eqnarray}%
For the last equation in (36), by noting the other equations in (36) and the
equations in (37) we have%
\begin{eqnarray*}
\overset{L}{\underset{k=0}{\prod }}\left( w\left( k\right) -\lambda \right)
&=&\overset{L}{\underset{k=0}{\prod }}w\left( k\right) +\overset{L}{\underset%
{k=0}{\sum }}\frac{\overset{L}{\underset{j=0}{\prod }}w\left( j\right) }{%
w\left( k\right) }\left( -\lambda \right) +\cdots +\overset{L}{\underset{k=0}%
{\sum }}w\left( k\right) \left( -\lambda \right) ^{L}+\left( -\lambda
\right) ^{L+1} \\
&=&\overset{L}{\underset{k=0}{\prod }}w\left( k\right) +\left( -1\right)
^{L-N+1}c_{L-N+1}\left( -\lambda \right) ^{N}+\cdots +\left( -1\right)
c_{1}\left( -\lambda \right) ^{L}+\left( -\lambda \right) ^{L+1} \\
&=&\overset{L}{\underset{k=0}{\prod }}w\left( k\right) +\left( -1\right)
^{L+1}\left( c_{L-N+1}\lambda ^{N}+\cdots +c_{1}\lambda ^{L}+\lambda
^{L+1}\right) =\overset{L}{\underset{k=0}{\prod }}w\left( k\right) .
\end{eqnarray*}%
Hence,%
\begin{eqnarray}
c_{L-N+1}\lambda ^{N}+\cdots +c_{1}\lambda ^{L}+\lambda ^{L+1}
&=&0\Rightarrow  \notag \\
c_{L-N+1} &=&-\left( \lambda ^{L-N+1}+c_{1}\lambda ^{L-N}+\cdots
+c_{L-N}\lambda \right) .
\end{eqnarray}%
Now, let%
\begin{equation}
\overset{L}{\underset{k=0}{\prod }}w\left( k\right) =\left( -1\right)
^{L+1}c.
\end{equation}%
Combining (37), (36), (39) and noting (38), one can see from the Vi\`{e}te's
formulas\ that $w\left( 0\right) ,w\left( 1\right) ,$ $\ldots ,w\left(
L-1\right) ,w\left( L\right) $ are all the roots of the following$\ \left(
L+1\right) $th-degree equation%
\begin{equation}
s^{L+1}+c_{1}s^{L}+\cdots +c_{L-N}s^{N+1}-\left( \lambda
^{L-N+1}+c_{1}\lambda ^{L-N}+\cdots +c_{L-N}\lambda \right) s^{N}+c=0.
\end{equation}%
However, according to Lemma 5 in [19], eq. (40) must have complex roots.
This contradicts the fact that $u\left( 0\right) ,u\left( 1\right) ,\ldots
,u\left( L-1\right) ,u\left( L\right) $ are all real. Therefore, if there
exist nonzero real numbers to satisfy eq. (3), then $B$ does not have a
Jordan block with dimension greater than two in its Jordan canonical form.
\end{proof}

$\left. {}\right. $

\begin{proof}[Proof of Necessity of Theorem 2]
For necessity, we show that if $B$ is singular, or noncyclic, or has a
Jordan block with dimension greater than two in its Jordan canonical form,
then the system (2) is non-nearly controllable. Firstly, if $B$ is singular,
we can write%
\begin{equation*}
B=\left[
\begin{array}{cc}
\ast & \ast \\
\mathbf{0} & 0%
\end{array}%
\right]
\end{equation*}%
without loss of generality. Then, $u(k)$ loses the ability in controlling $%
x_{n}(k)$ and the system is non-nearly controllable.

Secondly, if $B$ is not cyclic, it has an eigenvalue that corresponds to at
least two Jordan blocks in its Jordan canonical form and can be simply
written as%
\begin{equation*}
\left[
\begin{array}{ccccc}
\ast & \ast &  &  &  \\
& \lambda &  &  &  \\
&  & \ast & \ast &  \\
&  &  & \lambda &  \\
&  &  &  & \ast%
\end{array}%
\right]
\end{equation*}%
without loss of generality. Then, for any initial state $\xi =\left[
\begin{array}{ccccc}
\cdots & \xi _{i} & \cdots & \xi _{j} & \cdots%
\end{array}%
\right] ^{T}$, it follows%
\begin{eqnarray*}
&&\overset{L}{\underset{k=0}{\prod }}\left( I+u\left( k\right) B\right) %
\left[
\begin{array}{ccccc}
\cdots & \xi _{i} & \cdots & \xi _{j} & \cdots%
\end{array}%
\right] ^{T} \\
&=&\left[
\begin{array}{ccccc}
\ast & \ast &  &  &  \\
& \overset{L}{\underset{k=0}{\prod }}\left( 1+u\left( k\right) \lambda
\right) &  &  &  \\
&  & \ast & \ast &  \\
&  &  & \overset{L}{\underset{k=0}{\prod }}\left( 1+u\left( k\right) \lambda
\right) &  \\
&  &  &  & \ast%
\end{array}%
\right] \left[
\begin{array}{c}
\vdots \\
\xi _{i} \\
\vdots \\
\xi _{j} \\
\vdots%
\end{array}%
\right] \\
&=&\left[
\begin{array}{ccccc}
\cdots & \overset{L}{\underset{k=0}{\prod }}\left( 1+u\left( k\right)
\lambda \right) \xi _{i} & \cdots & \overset{L}{\underset{k=0}{\prod }}%
\left( 1+u\left( k\right) \lambda \right) \xi _{j} & \cdots%
\end{array}%
\right] ^{T}\triangleq \left[
\begin{array}{ccccc}
\cdots & \eta _{i} & \cdots & \eta _{j} & \cdots%
\end{array}%
\right] ^{T},
\end{eqnarray*}%
from which we can see $\xi _{i}\eta _{j}=\xi _{j}\eta _{i}$. This implies
that some of the state variables in the terminal states are linearly
dependent. Thus, the system does not have a large controllable region and is
non-nearly controllable.

Finally, if $B$ has a Jordan block with dimension greater than two in its
Jordan canonical form, then the system is non-nearly controllable by Lemma 9.
\end{proof}

$\left. {}\right. $

\begin{proof}[Proof of Lemma 9]
From the analysis on system (20), we can see that system (25) is
controllable on%
\begin{equation*}
\left\{ \xi =\left[
\begin{array}{cccc}
\xi _{1} & 0 & \cdots & 0%
\end{array}%
\right] ^{T}\left\vert \text{ }\xi _{1}\neq 0\right. \right\} ,\text{ }%
\left\{ \xi =\left[
\begin{array}{ccccc}
\xi _{1} & \xi _{2} & 0 & \cdots & 0%
\end{array}%
\right] ^{T}\left\vert \text{ }\xi _{2}\neq 0\right. \right\}
\end{equation*}%
respectively. Now, if system (25) has a controllable region that contains
more than one state and contains such a state%
\begin{equation*}
\bar{\xi}=\left[
\begin{array}{cccccc}
\xi _{1} & \cdots & \xi _{j} & 0 & \cdots & 0%
\end{array}%
\right] ^{T}
\end{equation*}%
where $\xi _{j}\neq 0$ and $j\geq 3$, then, for another state $\eta $ ($\eta
\neq \bar{\xi}$) in this region, there exist nonzero real control inputs $%
u(0),u(1),\ldots ,u(L_{1}-1),u\left( L_{1}\right) $ such that $\bar{\xi}$ is
transferred to $\eta $ and nonzero real control inputs $v(0),v(1),\ldots
,v(L_{2}-1),v\left( L_{2}\right) $ such that $\eta $ is transferred to $\bar{%
\xi}$. As a result, we have%
\begin{equation*}
\prod\limits_{k=0}^{L_{1}+L_{2}+1}\left( I+u\left( k\right) B\right) \bar{\xi%
}=\prod\limits_{k=L_{1}+1}^{L_{1}+L_{2}+1}\left( I+u\left( k\right) B\right)
\eta =\prod\limits_{k=0}^{L_{2}}\left( I+v\left( k\right) B\right) \eta =%
\bar{\xi}
\end{equation*}%
where $u\left( L_{1}+1\right) =v(0),u\left( L_{1}+2\right) =v(1),\ldots
,u\left( L_{1}+L_{2}\right) =v(L_{2}-1),u\left( L_{1}+L_{2}+1\right)
=v\left( L_{2}\right) $, i.e.%
\begin{equation*}
\prod\limits_{k=0}^{L_{1}+L_{2}+1}\left( I+u\left( k\right) B\right) \left[
\begin{array}{cccccc}
\xi _{1} & \cdots & \xi _{j} & 0 & \cdots & 0%
\end{array}%
\right] ^{T}=\left[
\begin{array}{cccccc}
\xi _{1} & \cdots & \xi _{j} & 0 & \cdots & 0%
\end{array}%
\right] ^{T}.
\end{equation*}%
This implies%
\begin{equation}
\prod\limits_{k=0}^{L_{1}+L_{2}+1}\left( I+u\left( k\right) B_{1,\ldots
,j}\right) \left[
\begin{array}{ccc}
\xi _{1} & \cdots & \xi _{j}%
\end{array}%
\right] ^{T}=\left[
\begin{array}{cccc}
\Pi & \Sigma _{1} & \cdots & \Sigma _{j-1} \\
& \ddots & \ddots & \vdots \\
&  & \Pi & \Sigma _{1} \\
&  &  & \Pi%
\end{array}%
\right] \left[
\begin{array}{c}
\xi _{1} \\
\vdots \\
\xi _{j}%
\end{array}%
\right] =\left[
\begin{array}{c}
\xi _{1} \\
\vdots \\
\xi _{j}%
\end{array}%
\right]
\end{equation}%
where%
\begin{equation*}
\Pi =\prod\limits_{k=0}^{L_{1}+L_{2}+1}\left( 1+u\left( k\right) \lambda
\right) ,\text{ }\Sigma _{i}=\frac{1}{i!}\frac{d^{i}\prod%
\limits_{k=0}^{L_{1}+L_{2}+1}\left( 1+u\left( k\right) y\right) }{dy^{i}}%
\left\vert
\begin{array}{c}
\\
_{y=\lambda }%
\end{array}%
\right. \text{ for }i=1,\ldots ,j-1.
\end{equation*}%
From (41)%
\begin{eqnarray*}
\Pi \xi _{1}+\Sigma _{1}\xi _{2}+\cdots +\Sigma _{j-1}\xi _{j} &=&\xi _{1},
\\
&&\vdots \\
\Pi \xi _{j-1}+\Sigma _{1}\xi _{j} &=&\xi _{j-1}, \\
\Pi \xi _{j} &=&\xi _{j}.
\end{eqnarray*}%
Using the last equation in the above equations, we can deduce%
\begin{equation*}
\Pi =1,\text{ }\Sigma _{1}=0,\ldots ,\Sigma _{j-1}=0.
\end{equation*}%
Then, it results in%
\begin{equation}
\prod\limits_{k=0}^{L_{1}+L_{2}+1}\left( I+u\left( k\right) B_{1,\ldots
,j}\right) =I.
\end{equation}%
However, since $j\geq 3$, there do not exist such nonzero control inputs to
satisfy (42) according to Lemma 1. Therefore, system (25) can only have the
states of which the latter $\left( n-2\right) $ entries are all zero in its
controllable regions that contain more than one state.
\end{proof}

\end{document}